\setlist{itemsep=3pt}
\DeclareMathOperator*{\argmin}{argmin} 
\DeclareMathOperator{\E}{\mathbb{E}}
\newcommand{\lmb}{\eta}
\newcommand{\lmbt}{\eta_t}
\newcommand{\minProb}{\Delta}
\newcommand{\bfpar}[1]{%
    \vspace{3pt}%
    \noindent%
    \textbf{#1:}%
}
\newcommand{\empar}[1]{%
    \vspace{3pt}%
    \noindent%
    \emph{#1:}%
}
\newcommand{\ind}[1]{\mathbf{1}_{\{#1\}}}
\definecolor{darkgreen}{rgb}{0,0.5,0}
\newcommand{\ignore}[1]{}
\newtheorem{theorem}{Theorem}[section]
\newtheorem*{theorem*}{Theorem} 
\newtheorem{lemma}[theorem]{Lemma}
\newtheorem{proposition}[theorem]{Proposition}
\newtheorem{observation}[theorem]{Observation}
\newtheorem{claim}{Claim}
\newtheorem{definition}[theorem]{Definition}
\newtheorem*{example*}{Example}
\title{\vspace{-8pt}
Markets with Heterogeneous Agents:\\Dynamics and Survival of Bayesian vs. No-Regret Learners\thanks{An earlier version of this paper appeared in the Proceedings of EC 2025; DOI: \href{https://doi.org/10.1145/3736252.3742595}{10.1145/3736252.3742595}.} }
\author{%
  David Easley \\
  Cornell University \\
  \texttt{dae3@cornell.edu} 
  \and
  Yoav Kolumbus \\
  Cornell University \\
  \texttt{yoav.kolumbus@cornell.edu} 
  \and
  \'Eva Tardos \\
  Cornell University \\
  \texttt{eva.tardos@cornell.edu} \\
}
\date{}
\begin{document}
\maketitle

\begin{abstract}

We analyze the performance of heterogeneous learning agents in asset markets with stochastic payoffs. Our main focus is on comparing Bayesian learners and no-regret learners who compete in markets and identifying the conditions under which each approach is more effective. We formally relate the notions of survival and market dominance studied in economics and the framework of regret minimization, thereby bridging these theories. A central finding is that regret plays a key role in market selection, but low regret alone does not guarantee survival: surprisingly, an agent may achieve even logarithmic regret and yet be driven out of the market when competing against a Bayesian learner with a finite prior that assigns positive probability to the correct model. At the same time, we show that Bayesian learning is highly fragile, while no-regret learning requires less knowledge of the environment and is therefore more robust. Motivated by this contrast, we propose two simple hybrid strategies that incorporate Bayesian updates while improving robustness and adaptability to distribution shifts, taking a step toward a best-of-both-worlds learning approach. More broadly, our work contributes to the understanding of dynamics of heterogeneous learning agents and their impact on markets.

\end{abstract}

\maketitle


\section{Introduction}\label{sec:intro}

Learning how to invest in asset markets has been a central topic of study in both the economics and finance literatures, as well as in 
computer science, where algorithmic approaches to decision-making and market dynamics have received growing attention.

Learning-based algorithmic tools are now central to the functioning of many large-scale markets. In financial settings, automated trading algorithms have come to dominate both stock exchanges~\citep{securities2020staff,goldstein2023high,o2014high} and trading strategies~\citep{harris,lopezdeprado}. 

Beyond asset markets, a similar trend is evident in online advertising, where auction platforms dominated by algorithmic bidding agents allocate ad space at scale and represent a substantial share---over $1\%$---of the US GDP~\citep{marto2024rise}. Learning-based systems also operate in online retail platforms such as Amazon~\citep{brown2023competition,deng2025exploring,battalio2024cost}, where pricing, recommendations, and inventory control are largely automated. These reflect a broader shift toward markets increasingly mediated by learning agents. 

This shift raises foundational questions about the dynamics such systems induce and the outcomes they produce. From the perspective of 
decision makers managing algorithmic investment strategies in automated markets, 
a natural question arises: 
{\em which learning approach should they adopt for their algorithms in a competitive  market environment?}

Generally, players in markets aim to maximize the utility they gain from their resulting wealth. A standard utility function used in both economics and computer science is the $\log$ of the player's wealth, leading to the objective of maximizing the growth rate of wealth\footnote{The economics literature on market selection also considers different objectives based on expected discounted utility of consumption. We consider asset markets in isolation and thus focus on growth rate. 
See also Sections \ref{sec:related-work} and \ref{sec:Bayesians}.} (see references below). 

However, the existing theories of learning from  economics and from computer science propose different approaches to how learning agents  should evaluate the feedback they receive and what methods they should use to select their strategies, and the two literatures ask different questions about the market dynamics. The literature in economics primarily focuses on Bayesian learning, while the computer science literature focuses on the framework of no-regret learning. While both Bayesian and no-regret learners modify their investment decisions over time, the assumptions about the learners and the measures of success are different. 

The economics literature studies Bayesian learning, assuming that agents have a prior about the underlying process and update their beliefs over time. If these agents correctly solve their decision problems, they succeed according to their own objectives. In the case that the agents' objective is expected growth rate and they reach correct beliefs, this aligns also with the external measure of success based on their wealth share in the market. Notably, however, in this approach learners do not directly respond to changes that they experience in their wealth, but rather learn about the underlying stochastic process and best respond to it according to their current beliefs, combining their prior with their experience.

The computer science literature makes no assumptions about the underlying process, allowing it to be adversarial, and focuses on no-regret learning, measuring the resulting outcome by the regret in growth rate compared to the best single strategy in hindsight; that is, the constant investment rule an investor would have used had they known the realization of the market variables. In the case of a market with a steady underlying stochastic process that we consider, this includes a comparison to the investor investing knowing the process (although this may not be the best strategy in hindsight for a given history; see Sections \ref{sec:compare} and \ref{sec:no-regret}).
Learning agents in this literature explicitly respond to the rate of change in their wealth at each step, considering it as the utility (or loss) input to their no-regret online learning algorithm. 

We study repeated interactions in asset markets with heterogeneous learning agents: both Bayesians and regret minimizers are present in the market and aim to maximize their expected growth rate of wealth.  
A question that naturally arises in this setting is 
whether one type of agent drives the other type out of the market. That is, does an agent's wealth share converge to zero (the agent vanishes) or does it remain bounded away from zero (the agent survives)? A version of this question has been studied in the economics literature in the context of the market selection hypothesis. The results there show that agents who do not correctly maximize expected growth rate vanish in the presence of agents who do correctly maximize expected growth rate. See the related work section for further discussion and references to this literature.   
It is important to note that both the computer science and economics literatures on this topic treat the asset market as ``perfectly competitive''; that is, although prices are endogenous (equilibrium prices evolve as wealths change over time), individual agents are ``price takers'' that do not take into account any impact of their decisions on equilibrium prices.

To understand which learning approaches are more useful in this competitive setting, we ask the market selection question for  markets populated with a greater variety of learning agents. Some are Bayesian optimizers, some are Bayesians who make small mistakes in the set of models they consider or in updating their beliefs, and some use no-regret learning. A key question that arises is: what is the meaning or role of regret in this context? Regret-minimizing agents achieve low regret in the complete asset market we analyze, but we show that if there is a Bayesian who makes correct updates and whose prior on models puts positive probability on the correct model, then regret minimizers vanish. That is, surprisingly, {\em successfully reaching low regret does not guarantee survival}. However, we also show that {\em Bayesian learning is fragile}; if the Bayesian does not have the correct model within the support of its prior, or if the Bayesian does not correctly balance the prior and likelihood functions in updating, then the Bayesian almost surely vanishes relative to the no-regret learner, even if the Bayesian makes arbitrarily small errors. 

With these insights in hand, we propose two simple and practical learning methods that combine the fast convergence of Bayesian updates with greater robustness.

In the first method, the key idea is to introduce a small regularization into the Bayesian update, so that, initially, the weight on incorrect models still decays exponentially fast, as in standard Bayesian learning, but convergence slows enough to prevent any hypothesis from being completely ruled out. We show that this simple modification preserves the constant-regret guarantee in stationary settings, while also maintaining responsiveness to distribution shifts within a given model class, with regret growing only logarithmically in time when such shifts occur.   
Intuitively, once the weight on a model becomes small, the regularization keeps it ``alive,'' preserving the ability to recover if that model later becomes relevant again. In contrast, we show that standard Bayesian learning can suffer regret linear in $T$ even from a single distribution shift. While adaptive variants of no-regret learning exist \citep{hazan2009efficient,kozat2007universal,hazan2016introduction}, they still incur regret that grows with time even in the absence of shifts, and thus may not survive in competition against players investing according to the correct distribution, or even against Bayesians.

The second method addresses the fragility of Bayesian learning under prior misspecification. The idea is to follow the Bayesian strategy while tracking its realized performance relative to a no-regret algorithm simulated in parallel. If the Bayesian prior is misspecified, this performance gap grows, and the algorithm switches to the no-regret strategy, which has already been updated on the observed data so far. We show that when the Bayesian prior is well specified, the algorithm stays with the Bayesian strategy with high probability and achieves constant expected regret overall. In the worst case, it guarantees sublinear regret by inheriting the regret bound of the no-regret algorithm, up to the switching threshold. 
The main challenge is to choose the switching rule so that false switches are rare: even when the prior is correct, random fluctuations in the stochastic process may temporarily make the Bayesian learner appear to lag behind the no-regret benchmark, although switching away from the correct Bayesian strategy would lead to higher regret in the long run.

\subsection{Summary of Main Results}\label{sec:results-overview}
To the best of our knowledge, this work is the first to analyze the dynamics of no-regret learners in competition with agents using other learning paradigms; to study the conditions under which no-regret learners survive in investment settings (in contrast to only bounding their regret level); and to establish a connection between the frameworks of regret minimization and Bayesian learning in asset markets. 
Our analysis yields a clear characterization of the relationship between regret rates, wealth shares, and long-term survival in such competitive environments. 

Beyond the theoretical contribution, our results provide practical insights for selecting learning strategies in market-facing applications. 
We show that competition imposes a high-stakes tradeoff between faster and more robust learning. 
The right approach depends on the extent to which campaign managers can access and trust information about the environment and the strategies of competitors. Bayesian methods, as suggested by economic theory, are powerful when the model class is accurate and known, and the update implementation is precise, but risky otherwise. No-regret learning methods, by contrast, are useful and robust when such knowledge is limited. 
The new methods we propose combine the strengths of both approaches, highlighting  regularization and hybrid strategies as practical tools for balancing this tradeoff.

\vspace{8pt}
\noindent
\textbf{Main Theorems:} Before turning to related work, we conclude this part of the introduction with an informal summary of our main theorems. 

\empar{Regret and Survival}  
Theorem~\ref{thm:survival-by-finite-regret-gap} characterizes the regret condition for survival in a competitive asset market: an agent survives if and only if its regret remains bounded by an additive constant relative to every competitor.

Remarkably, if two players both have similar regret asymptotics with respect to the best strategy in hindsight (e.g., both have $O(\log T)$ regret) but with different coefficients, the agent with the larger constant almost surely vanishes from the market.  

\empar{Regret of the Optimal Investment Strategy}  
Regret with respect to the best strategy in hindsight is a strong benchmark; as we discuss in Section~\ref{sec:compare}, due to stochasticity in market dynamics, even the best implementable strategy (i.e., one that does not depend on future outcomes, but uses the true probabilities of states) has positive regret. In Theorem~\ref{thm:regret-of-q}, in Section~\ref{sec:no-regret}, we calculate this expected regret level, showing that it is constant and depends only on the number of assets in the market.
This provides a natural benchmark for comparison with agents that invest optimally according to the true distribution of states.

\empar{Survival against Perfect Bayesians}  
Theorem~\ref{thm:regret-of-a-Bayesian} shows that perfect Bayesian learners with a finite support have constant expected regret, and that their regret difference relative to an agent playing $q$ is also pathwise bounded.
Intuitively, because a perfect Bayesian maintains a regret level close to the optimal implementable benchmark (investing according to the true distribution), any learner whose regret grows at a faster rate cannot survive against the Bayesian.

\empar{Bayesians with Inaccurate Priors}  
Theorem~\ref{thm:wrong-Bayesians-vanish} shows that a Bayesian agent with the correct state distribution not within its support suffers linear regret, and, as a result, vanishes from the market in competition with any no-regret learner.  
This theorem is followed by an analysis of the expected survival times of Bayesians with inaccurate priors, as a function of the size of the errors in the models considered in their priors and the regret rate of the best competitor.  

\empar{Bayesians with Inaccurate Updates}  
In Section~\ref{sec:noisy-Bayesian}, we analyze a scenario where a Bayesian learner performs ``trembling hand'' updates, making small zero-mean errors in the weight given to the current prior and the new data at each step.  
Theorem~\ref{thm:wrong-update-Bayesians-vanish} demonstrates that even such tiny errors break Bayesian learning, causing the learner to fail to converge and incur linear regret.  
As a result, any no-regret (or other) learner that converges to the correct model at any rate outperforms such an inaccurate Bayesian, driving it out of the market.  

\empar{Robust Bayesian Updates}  
In Section~\ref{sec:robust-Bayes-update}, we study two methods for making Bayesian learning more robust. Proposition~\ref{thm:dist-shift-linear-regret-proposition} shows that standard Bayesian learning can incur linear regret under distribution shifts even within the support of the Bayesian prior. Theorem~\ref{thm:robust-Bayesian} shows that a regularized Bayesian update achieves constant regret when there are no distribution shifts and logarithmic regret in $T$ when shifts occur. Theorem~\ref{thm:hybrid-Bayes-No-regret} considers misspecified priors and shows that our switching algorithm achieves constant expected regret when the prior is well specified and sublinear worst-case regret.

\vspace{8pt}
\noindent
\textbf{Roadmap:}
The paper is structured as follows. After a review of related work,   
Section~\ref{sec:model} presents the market model and preliminary analysis, discussing relative wealths and the role of relative entropy in characterizing survival under stationary investment rules.  
Section~\ref{sec:compare} reviews known results from no-regret learning and analyzes the relation between regret and wealth shares, showing how regret can be used to compare learners.  
Section~\ref{sec:Bayesians} restates in the language of our market setting several known analyses of Bayesian learning that are important for our comparison of Bayesians, no-regret learners, and imperfect Bayesians.   
Section~\ref{sec:no-regret} analyzes the regret of a player who knows the correct stochastic model, and gives survival conditions in competition with a perfect Bayesian.  
Section~\ref{sec:imperfect-Bayesians} studies imperfect Bayesians with small mistakes in their priors or update process.  
Section~\ref{sec:robust-Bayes-update} 
presents methods  
for making Bayesian learning more robust under distribution shifts and prior misspecification. 
Section~\ref{sec:simulations} presents simulation results for several scenarios of competition between no-regret learners, perfect Bayesians, and imperfect Bayesians, to complement our analysis and provide further intuition.  
Finally, Section~\ref{sec:discussion} discusses the modeling choices and the scope and interpretation of the analysis, and Section~\ref{sec:Concln} concludes with broader implications of the results. Formal proofs are deferred to Appendix~\ref{sec:appendix-proofs}, and further background on the Arrow securities model is given in Appendix~\ref{sec:appendix-Arrow-securities}.

\subsection{Related Work}\label{sec:related-work}

\bfpar{No-regret learning}
No-regret learning is a fundamental concept in the study of learning in games, tracing back to early work in game theory ~\citep{blackwell1956analog,brown1951iterative,hannan1957lapproximation,robinson1951iterative} and subsequently developed in influential papers on no-regret algorithms ~\citep{DBLP:journals/siamcomp/AuerCFS02,blum2008regret,foster1997calibrated,fudenberg1995consistency,hart2000simple,kalai2005efficient}. See ~\cite{cesa2006prediction,hart2013simple}, and~\cite{slivkins2019introduction} for broad introductions. The dynamics of no-regret learning algorithms in repeated games have been extensively studied in various domains, including repeated auctions ~\citep{aggarwal2024randomized,bichler2023convergence,daskalakis2016learning,guo2022no,feng2020convergence,deng2022nash,kolumbus2022auctions,kolumbus2024paying}, budgeted auctions ~\citep{balseiro2019learning,feng2024strategic,fikioris2023liquid,fikioris2024learning,fikioris2026robust,lucier2024autobidders}, portfolio selection, as we discuss below  ~\citep{DBLP:journals/ml/BlumK99,gofer2016lower,hazan2015online,li2012expected}, repeated contracting ~\citep{collina2024repeated,guruganesh2024contracting,zhu2022sample}, bilateral trade ~\citep{cesa2021regret,cesa2023bilateral}, and implications of no-regret algorithms on user incentives in general games ~\citep{kolumbus2022and}. In all these works on learning dynamics, either a single learner is considered, or there are multiple learners where it is assumed that all are regret minimizers.

When optimizing growth rate of wealth over multiple interactions, the resulting  growth rate (the change in the $\log$ of wealth) is an additive function over the periods, so classical no-regret learning applies. In the context of investment portfolio selection, ~\cite{DBLP:journals/ml/BlumK99} and~\cite{hazan2015online} introduced  algorithms with worst-case regret bounds of $O(\log(T))$. See also ~\cite{cesa2006prediction}, Chapter 10 and ~\cite{hazan2016introduction} Chapter 4 for a  discussion and analysis of investment strategies achieving this regret bound.  
These studies emphasize regret as the key metric, and in that, implicitly assume that if the time-average of the regret vanishes, then the agent is considered successful. However, we show that in a competitive market, an agent may achieve vanishing regret yet still be driven out of the market, with wealth share converging to zero. 

\bfpar{Bayesian learning} In economics, it is standard to view learning as a by-product of expected utility maximization. That is, the decision-making agent's objective is not learning per se, but making good decisions as evaluated by expected utility. It is well known, though not often written formally, that expected utility theory, and its subjective axiomatic foundation, as in~\cite{savage}, imply Bayesian learning. So the focus on Bayes is not surprising. 

There is also a long tradition in economics and finance of arguing that the reason financial markets price assets correctly is that the forces of market selection drive out irrational traders, leaving rational traders to determine prices. This argument is historically most closely associated with~\cite{alchain1950}, \cite{friedman1953}, and~\cite{fama1965}. 
Here, rationality is interpreted as having a statistically correct model of the stochastic process that determines the returns on assets and optimizing according to it.\footnote{No-regret learning, in comparison, can be viewed from this perspective as a form of bounded rationality.} As this notion of rationality implies Bayes, the argument is often extended to argue that markets select for Bayesians and against all other learning procedures that do not at least asymptotically behave like Bayes. The intuition for the `market selection' argument is simple: if rational and irrational traders disagree, then rational traders take money away from irrational ones in the events that rational traders consider to be likely. The argument then goes on to note that as rational traders compute conditional probabilities correctly, their view of likely events is at least asymptotically correct. So they survive and irrational traders do not survive. This argument has several important qualifications, most importantly, complete markets and the possibility of eventually-correct beliefs for rational traders. The market selection hypothesis, and qualifications to it, have subsequently been investigated by~\cite{blumeeasley1992}, \cite{sandroni2000},  and~\cite{blumeeasley2006}. 

There is also a literature in finance and operations research on wealth accumulation for investors who maximize the expected growth rate of their wealth. The investment rule that implements this objective is known as the Kelly rule~\citep{Kelly1956}. There is a long-standing controversy on the applicability of this rule in consumption-investment decisions, see~\cite{Samuelson1971fallacy} or~\cite{Hens2009} for a treatment of wealth dynamics with a variety of objectives including expected growth rate maximization. 

Interestingly, the computer science literature on no-regret learning and the economics literature on Bayesian learning in asset markets differ not only in their questions and methods, but are also almost entirely disjoint, with very few cross-references between them. In this work, we establish a formal connection between these different approaches to learning in markets. We show that there is a surprisingly simple relationship between the entropy of the learner---a key notion in the Bayesian learning literature---and the regret of the learner in these markets. We then analyze a heterogeneous group of Bayesian learners and regret minimizers interacting in a competitive market, and study the implications of these different learning approaches for survival in the market.

\bfpar{Dynamics in markets and investment games}
A rich body of literature explores the dynamics of various market models. Prominent examples include studies on the t\^atonnement price dynamics~\citep{cheung2018amortized,cole2008fast,walras1900elements}, proportional response dynamics in Fisher markets~\citep{cheung2018dynamics,kolumbus2023asynchronous,zhang2011proportional}, dynamics in exchange economies~\citep{branzei2021exchange,branzei2021proportional} and the von Neumann model of expanding economies~\citep{branzei2018universal,branzei2025tit}, repeated Cournot competition games~\citep{bischi2008learning,farrell1989renegotiation,kolumbus2022and}, and dynamics in pricing games~\citep{banchio2023adaptive,hartline2024regulation,hartline2025regulation}.
These studies focus on the joint dynamics of agents, analyzing their convergence properties, equilibria, and implications for overall outcomes. While many of these dynamics are (or can be framed as) distributed learning processes, these works assume homogeneous learning behavior among agents and do not address settings where learners use fundamentally different learning approaches. 
Other research on investment games focuses on equilibrium outcomes and interventions designed to incentivize investment~\citep{babaioff2022optimal,HalacKremerWinter2020raising,jackson2021systemic}, which do not involve learning.

\bfpar{Games with heterogeneous learners}
The present work studies interactions between heterogeneous learning agents and their implications for market dynamics and outcomes. The literature on learning in games contains very few studies that analyze dynamics between learners using different strategies. The most closely related work to ours is~\cite{blumeeasley1992}, which shows that Bayesian learners outperform heuristics based on imitation or search in investment games. Another line of work on repeated games with players using different types of reasoning
concerns Stackelberg games where an optimizer seeks the best strategies against an opponent who commits to using a fixed no-regret algorithm in the game~\citep{arunachaleswaran2024algorithmic,arunachaleswaran2024learning,arunachaleswaran2024pareto,BravermanMaoSchneiderWeinberg2018selling,cai2023selling,guruganesh2024contracting,mansour2022strategizing}. This differs significantly from our study where we do not assume that any player is strategically optimizing as a Stackelberg leader.

\section{Model and Preliminaries}\label{sec:model}
We use the market structure as in \cite{blumeeasley1992} in our analysis of wealth and price dynamics. 
At each time in discrete steps indexed by $t=1,2,\dots$, after portfolios are chosen, a state $s_t$ is realized. There is a finite number of states $s\in \{1, \dots, S\}$ distributed i.i.d. with probability $q=(q_1, \dots, q_S)$ where $q_s \geq \Delta > 0$ for all $s$. There are $K$ assets with state contingent payoffs described by a $K \times S$ matrix of asset payoffs $A$.  We assume that the asset structure is {\bf complete}; that is, $A$ has rank $S$. So there must be at least $S$ assets, and if there are more than $S$ assets, then $K-S$ of them are redundant. Thus, we restrict attention to $K=S$ assets. 

It is convenient to analyze wealth dynamics using an equivalent set of assets known as
Arrow securities. These assets are indexed by states: one unit of asset $s$ purchased at time $t$ pays one unit of wealth if $s_t=s$ and zero otherwise. Any complete asset structure can be represented by a simple transformation of Arrow securities. 
This asset market model, originally introduced by \cite{Arrow}, is the standard general equilibrium model of an economy with a complete set of assets, see \cite{MWG} Chapter 19. We use Arrow securities only as a modeling convenience. Agents may invest in the actual assets in an economy with an arbitrary, but complete, asset market. We discuss this equivalence in Appendix \ref{sec:appendix-Arrow-securities}.

There are $N$ agents indexed by $n=1, \dots, N$. The wealth of agent $n$ after time $t$ is $w^n_t$. Agents begin with initial wealths $w^n_0>0$ for agent $n$. These initial wealths are normalized to sum to one.\footnote{This normalization is done only for convenience. We analyze relative wealths. The actual size of the economy (the aggregate wealth) does not play a role in relative wealth dynamics and may follow an arbitrary stochastic process, as long as it is bounded away from zero and bounded from above.} 
Let $\alpha^n_{st}$ be the fraction of agent $n$'s wealth invested in asset $s$ at time $t$.  

In a market equilibrium, the price of asset $s$ must be such that the aggregate demand for asset $s$ equals the aggregate supply of one. So the price is given by 
\begin{equation}
p_{st} = \sum_n \alpha^n_{st}w^n_{t-1}.
\end{equation}
Note that asset prices sum to one as the aggregate wealth is one.

Using this structure, we can describe the evolution of wealths and asset prices as functions of the agents' investments. The step-by-step stochastic process of agent $n$'s wealth is

\begin{equation}\label{eq:wealth-evolution}
w^n_t = \prod_{s=1}^S  \bigg(\frac{\alpha^n_{st}}{p_{st}}\bigg)^{\mathbf{1}_{s_t = s}} w^n_{t-1},
\end{equation}
where $\mathbf{1}_{s_t = s}$ is the indicator function taking on the value $1$ if the state realized at time $t$ is state $s$, i.e., $s_t=s$, and equals zero otherwise. Equivalently, the wealth at time $t > 0$ is 
\begin{equation}\label{eq:wealth-equation}
w^n_{t}   = \prod_{\tau=1}^{t} \prod_{s=1}^S  \bigg(\frac{\alpha^n_{s\tau}}{p_{s\tau}}\bigg)^{\mathbf{1}_{s_\tau = s}} w^n_0.
\end{equation}

Directly analyzing individual wealth processes is potentially complex as they depend on prices which are endogenous. Instead, we analyze relative wealths. Let $r^{nm}_t$ be the ratio of the wealth of agent $n$ to the wealth of agent $m$ at time $t$. Prices cancel, so wealth ratios follow
\begin{align}\label{eq:wealth-share}
	r^{nm}_{t} &=  \prod_{s=1}^S  \bigg(\frac{\alpha^n_{st}}{\alpha^m_{st}}\bigg)^{\mathbf{1}_{s_t = s}} r^{nm}_{t-1},\\
	r^{nm}_{T}   &= \prod_{t=1}^{T} \prod_{s=1}^S \bigg (\frac{\alpha^n_{st}}{\alpha^m_{st}}\bigg)^{\mathbf{1}_{s_t = s}} r^{nm}_0.
\end{align}
Taking the logarithm of both sides yields
\begin{equation}\label{eq:logshares}
	\log(r^{nm}_{T}) = \sum_{t=1}^{T}\sum_{s=1} ^S \mathbf{1}_{s_t = s} \log (\frac{\alpha^n_{st}}{\alpha^m_{st}}) + \log(r^{nm}_{0}). 
\end{equation}

To provide some insight into how wealth ratios evolve, it is useful to first consider a special case in which investment sequences are constant over time: $\alpha^n_t=\alpha^n$ for all $t$. With constant investments, dividing both sides of equation (\ref{eq:logshares}) by $T$, taking the limit as $T\to \infty$, omitting the constant term $\log(r_0^{nm})$ that vanishes in the limit, and applying the Law of Large Numbers yields\footnote{Note that if $\alpha^n_{st}$ were a randomized strategy with expectation $\alpha^n$ we could not simply take the expectation, but by Jensen's inequality, we would get a wealth ratio bounded by the entropy difference $\lim_{T\to \infty} T^{-1} \log(r^{nm}_{T}) \leq I_q(\alpha^m) - I_q(\alpha^n)$.}

\begin{equation}\label{eq:wealthshares}
  \lim_{T\to \infty} T^{-1} \log(r^{nm}_{T}) = \sum_{s=1}^S q_s \log(\frac{\alpha^n_s}{\alpha^m_s}). 
\end{equation}

To interpret equation (\ref{eq:wealthshares}) it is useful to define the relative entropy (a.k.a. KL-divergence) of investment rule $\alpha$ relative to the probability distribution $q$ as

\begin{equation}
    I_q(\alpha) = \sum_{s=1}^S q_s \log(\frac{q_s}{\alpha_s}).
\end{equation}
Then, rewriting equation (\ref{eq:wealthshares}) using relative entropies we have 

\begin{align}
\label{eq:fixed invext entropy}
    \lim_{T\to \infty} T^{-1} \log(r^{nm}_{T}) = I_q(\alpha^m) - I_q(\alpha^n). 
\end{align}

Thus, in the constant investment rules environment, the fate of agent $n$ depends on $I_q(\alpha^n)$ and $I_q(\alpha^m)$ for all other traders $m$. For example, if $n$ has the unique lowest relative entropy then $n$'s wealth converges to $1$ and all other agents' wealth converge to $0$. Entropy is minimized at $\alpha=q$, so if a single agent invests according to $q$, that agent's wealth share converges to $1$.

In the next two sections, we consider Bayesian learners and regret minimizers. Their investment rules are not constant, so we need to generalize this analysis to allow for non-constant rules. 

Before considering specific learning strategies, it is useful to note that due to the multiplicative nature of wealth evolution (see Equation (\ref{eq:wealth-evolution})), any learning strategy that assigns zero weight to any state risks losing all wealth, with no possibility of recovery; formally:

\begin{definition}\label{def:survive-and-vanish}
    An agent $n$ {\em survives} in the market if there exists a constant market share $c > 0$ such that \ 
    $
    \liminf_{T \rightarrow \infty} w^n_T > c
    $ \ almost surely.  
    We say that agent $n$ {\em vanishes} if $n$'s  wealth converges to zero in the limit. That is, \ 
    $
    \lim_{T \rightarrow \infty} 
    w^n_T = 0$ \ almost surely. 
\end{definition}

\begin{observation}
    Let $\{\alpha^n_t\}_{t=0}^\infty$ be a sequence of investment profiles for agent $n$. If there exists a state $s \in S$ for which $\alpha^n_{st} = 0$ infinitely often, then the agent $n$ vanishes from the market.
\end{observation}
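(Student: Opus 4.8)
The plan is to show directly that if agent $n$ sets $\alpha^n_{st} = 0$ at infinitely many times $t$ when state $s$ is the one that happens to be realized, then $w^n_t = 0$ from that point on; and then to argue that the (random) set of such ``fatal'' times is almost surely infinite, so that with probability one the agent's wealth hits zero and stays there. First I would fix a state $s$ for which $\alpha^n_{st} = 0$ infinitely often, and let $\mathcal{T} = \{t : \alpha^n_{st} = 0\}$, an infinite deterministic (or at least non-anticipating) set of times. By the wealth evolution equation (\ref{eq:wealth-evolution}), whenever $t \in \mathcal{T}$ and the realized state is $s_t = s$, the factor $(\alpha^n_{st}/p_{st})^{\mathbf{1}_{s_t=s}}$ equals $0/p_{st} = 0$, so $w^n_{t+1} = 0$, and since wealth evolution is multiplicative, $w^n_{\tau} = 0$ for all $\tau > t$. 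Hence $\E[w^n_T] = 0$ for all large $T$ on the event that at least one time in $\mathcal{T}$ coincides with a realization of state $s$.

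The remaining step is to show that this event has probability one. Since states are i.i.d.\ with $q_s \geq \Delta > 0$, and $\mathcal{T}$ is infinite, the events $\{s_t = s\}$ for $t \in \mathcal{T}$ are independent (conditioning, if needed, on the investment strategy, since $\alpha^n_{st}$ may depend on past states but the fatal times are determined before $s_t$ is drawn). Each such event has probability at least $\Delta$, so by the second Borel--Cantelli lemma (or simply because $\prod_{t \in \mathcal{T}}(1-q_s) \leq \prod (1-\Delta) = 0$), almost surely $s_t = s$ for some $t \in \mathcal{T}$ --- indeed for infinitely many such $t$. On this probability-one event, $w^n_T = 0$ for all sufficiently large $T$, so $\lim_{T\to\infty}\E[w^n_T] = 0$ with probability one, which is exactly the definition of $n$ vanishing in Definition~\ref{def:survive-and-vanish}.

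One subtlety to handle carefully is the interaction between the agent's strategy and the realized states: if $\alpha^n_t$ is adaptive, the set $\mathcal{T}$ is itself random, so ``$\alpha^n_{st}=0$ infinitely often'' should be read as an almost-sure event, and the independence argument must use that $\alpha^n_{st}$ is $\mathcal{F}_{t-1}$-measurable while $s_t$ is drawn fresh. I expect this measurability bookkeeping to be the only real obstacle; the core mechanism --- a single zero-weight bet on the realized state wipes out all wealth permanently --- is immediate from the multiplicative form of (\ref{eq:wealth-evolution}), and the Borel--Cantelli step is routine given $q_s \geq \Delta$. A clean way to finesse the adaptivity is to condition on the entire investment sequence (equivalently, work on the event $\{\alpha^n_{st}=0 \text{ i.o.}\}$ and restrict to a deterministic infinite subsequence of fatal times contained in it), apply the i.i.d.\ argument there, and then integrate out.
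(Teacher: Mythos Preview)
The paper does not give a formal proof of this observation; it is stated as immediate from the multiplicative form of the wealth evolution in Equation~(\ref{eq:wealth-evolution}), with the one-line justification that ``any learning strategy that assigns zero weight to any state risks losing all wealth, with no possibility of recovery.'' Your argument is exactly the natural way to make that sentence rigorous, and it is correct.

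Two minor refinements. First, for the adaptive-strategy case you flag, the cleanest tool is the conditional (L\'evy) Borel--Cantelli lemma rather than ``conditioning on the entire investment sequence'': since $\alpha^n_{st}$ is $\mathcal{F}_{t-1}$-measurable and $s_t$ is independent of $\mathcal{F}_{t-1}$, one has $\Pr(\alpha^n_{st}=0,\ s_t=s \mid \mathcal{F}_{t-1}) = q_s\,\mathbf{1}_{\{\alpha^n_{st}=0\}}$, and on the event $\{\alpha^n_{st}=0\ \text{i.o.}\}$ these conditional probabilities sum to infinity, so the fatal coincidence occurs almost surely. Second, a small wording slip: ``$\E[w^n_T]=0$ on the event'' should read ``$w^n_T=0$ on the event''; the conclusion $\E[w^n_T]\to 0$ then follows by bounded convergence since $w^n_T\le 1$.
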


This simple observation immediately rules out certain learning approaches, such as those that treat asset selection as pure actions. It demonstrates that any reasonable learning strategy must be restricted to investment profiles with full support over the assets. 
Specifically, all learning strategies we consider will assign at least a small constant probability $\minProb$ to each state.

\section{Comparing Learners}\label{sec:compare}
As we noted above, an investor who knows the probabilities $q_s$   maximizes the expected growth rate of their wealth at each time $t$ by using the investment rule $\alpha^n_{t}=q$. Next, consider an agent $n$  with a sequence of investment profiles $\alpha_t^n$. We will first compare the wealth of this agent with that of an agent who uses the investment rule $\alpha^n_{t}=q$, and then compare different types of learning agents.

As we saw in the previous section in Equation (\ref{eq:logshares}), the log of the wealth ratio of an agent at time $T$ compared to an investor using the optimal $\alpha^n_{t}=q$ rule can be expressed as

\begin{equation}\label{eq:investment return}
\sum_{t=1}^{T}\sum_{s=1}^S \mathbf{1}_{s_t=s} \log \alpha^n_{st} -\sum_{t=1}^{T}\sum_{s=1}^S \mathbf{1}_{s_t=s} \log q_s + \log r^{nm}_{0}. 
\end{equation}

The main goal of this paper is to compare different theories on how to best learn to invest. One theory is Bayesian learning, assuming that the investor $n$ has a Bayesian prior over the possible asset return probabilities. At each step, a Bayesian learner uses the expected return probability implied by their beliefs, updating their beliefs according to the observed states using Bayes' rule. One can also use multi-armed bandit learning methods given such a prior of possible asset return distributions, and have the agent invest using one of the distributions in its prior at each step, aiming to learn which one is the best. A different theory would consider the maximization problem of Equation (\ref{eq:investment return}) as a concave maximization problem and use no-regret learning algorithms to optimize the value. 

From the latter two perspectives, if the difference between the two sums in Equation (\ref{eq:investment return}) is negative, this is a form of regret\footnote{\label{footnote:regret-wrt-q}This is regret only with respect to $q$; while $q$ is optimal in expectation, it may not be the best fixed strategy in hindsight for a given sequence of state realizations. We expand on this point later; see Theorem  \ref{thm:regret-of-q}.} against a fixed strategy $\alpha=q$. First, consider an investor with a fixed number of possible models for $q$, $\{\theta^1, \ldots, \theta^K\}$. This formulation of the problem would allow us to consider an investor using a classical multi-arm bandit approach, such as UCB (see, e.g., \cite{slivkins2019introduction}): the investor can invest using one of the $\theta^k$ as investment rules, and use no-regret methods to update their belief on the expected value of each of the options. Assuming the correct distribution $q$ is among the options considered, we can express the gap in value between any one choice $\theta^k$ and the true distribution using Equation (\ref{eq:fixed invext entropy}) as 
\begin{equation}\label{eq:MAP-gap}
   I_q(\theta^k)-I_q(q)=I_q(\theta^k).
\end{equation}

Classical learning algorithms, such as UCB, then guarantee regret against the optimal strategy $q$ depending inversely on the above gap that is at most 
\begin{equation}
 O(\log T) \cdot\sum_{k:\theta^k\neq q}\frac{1}{I_q(\theta^k)}.  
\end{equation}
In the next section, we consider a Bayesian investor who not only has a finite list of possible $q$ vectors, but also has a prior over these models, and uses Bayes' rule to update their prior given the outcomes seen in previous rounds.  

Alternatively, we can use learning without a list of options or a prior, and  hence no longer depend on the assumption that the correct probability vector is one of the options considered. Let us define the utility of an investor at time $t$ with investment strategy $\alpha$ as 
\begin{equation}
    U_t(\alpha)=\sum_s \mathbf{1}_{s_t = s} \log \alpha_{st}.
\end{equation}
$U_t(\alpha)$ is the rate of change in the investor's relative wealth in this period compared to the investor using $\alpha=q$.
Notice that this is a concave function of $\alpha$ for any outcome of the random variables. For an investor $n$ with investment sequence $\alpha^n_{1:T}=(\alpha^n_1, 
\ldots, \alpha^n_T)$, the value the investor is aiming to optimize is
\begin{equation}
\sum_{t=1}^{T}U_t(\alpha^n_t).
\end{equation}
Using these definitions, the classical regret from no-regret learning for an investment sequence $\alpha_{1:T}^n$ at time $T$ is defined as 
\begin{equation}
    R_T(\alpha_{1:T}^n) = \max_{\alpha\in \Delta_S} \sum_{t=1}^T U_t(\alpha)-\sum_{t=1}^T U_t(\alpha^n_t).
\end{equation}

Note that the regret of $\alpha^n_{1:T}$ is at least as large as its regret against the investor with the optimal fixed rule $\alpha=q$. 
The maximum fixed rule $\alpha$ used in the definition is the best fixed investment rule with hindsight, having seen the outcome of all choices and random state realizations, which is not an implementable investment rule even for an investor who knows the probabilities. So even the investor with knowledge of the probabilities and using $\alpha^n_t = q$ for all $t$ will have some regret based on the outcome of random events (see Section \ref{sec:no-regret} and Theorem \ref{thm:regret-of-q} for a detailed analysis).

Next, we reconsider two investors, $n$ and $m$, with investment sequences $\alpha_{1:T}^n$ and $\alpha_{1:T}^m$, respectively. By Equation (\ref{eq:logshares}), we can express their time-$T$ wealth-shares using regret notation as
 \begin{align}
 \label{eq:logshare-as-regret}
	\log(r^{nm}_{T})   & = \sum_{t=1}^{T} U_t(\alpha^n_t)-\sum_{t=1}^{T} U_t(\alpha^m_t) + \log(r^{nm}_{0})\\ \notag
    & = R_T(\alpha_{1:T}^m)-R_T(\alpha_{1:T}^n)+\log(r^{nm}_{0}).
\end{align}
Note that this relation is not asymptotic, but applies for any time $T$.  
So, given the initial wealth shares, the agents’ regret levels specify their resulting wealths.

Despite this very strong benchmark for regret, and without any prior, the convex optimization literature offers very strong guarantees on expected regret. For any concave function $U_t(\alpha)$, a simple gradient-descent based algorithm guarantees regret at most $\sqrt{T}$ over $T$ periods, while for the $\log$-based utility function (that is Exp-Convex) considered here, second-order optimization methods guarantee a $O(\log T)$ regret bound~\citep{DBLP:journals/ml/BlumK99} (see also~\cite{cesa2006prediction} Theorem 3.3 or~\cite{hazan2016introduction} Theorem 4.4.)

The central question we ask is:  
{\em
What learning algorithms are better in ensuring that the player's wealth remains high, or at the very least, does not vanish? How do Bayesian learners and no-regret convex optimizers compare in terms of relative wealth? 
}
Using Equation (\ref{eq:logshare-as-regret}), we get the following result, connecting regret and survival in the market.
The proof is given in Appendix \ref{sec:appendix-proofs}. 

\begin{theorem}\label{thm:survival-by-finite-regret-gap}
    Denote the regret levels of agents, $n$ and $m$, by $R^n(T)$  and $R^m(T)$ respectively. If $ \ \lim_{T \rightarrow \infty} R^n(T) - R^m(T) = \infty$ a.s., then agent $n$ vanishes from the market. 
    Agent $n$ survives if and only if for every agent $m \neq n$ it a.s. holds that $\limsup_{T \rightarrow \infty} R^n(T) - R^m(T) < \infty$.
\end{theorem}

Recall that the entropy of the investment rule using the correct distribution of states is $I_q(q)=0$. 
Also, recall from equation (\ref{eq:MAP-gap}) that the gap in expected reward between this and a different investment rule $q'$ is the entropy $I_q(q')$. This shows that the wrong fixed strategy will have linear regret $\Theta(I_q(q')T)$. 
Learners with $O(\sqrt{T})$ do better and learners with only $O(\log T)$ regret do even better. Note that the constant in the regret rate will matter. The investor with regret higher by only a constant factor will also vanish! But how well do all these learners do against a Bayesian learner?

\section{Bayesian Learners}\label{sec:Bayesians}
Next, we consider the standard Bayesian learning analysis. We restate and summarize in this section known results for completeness and to provide a framework for our comparison of Bayesian and no-regret learning.

Consider a Bayesian watching the process on states in the market. Suppose that the Bayesian does not know the probability $q$, and instead considers a finite number of possibilities for $q$: models $\theta^1, \dots, \theta^K$. We assume that $\theta^k_s \geq \minProb$ for all $k$ and $s$. Survival with Bayesian models of this type has   also been studied in \cite{blumeeasley1992,blumeeasley2006}. See the remark at the end of this section for a connection to results on Bayesian learners with a continuum prior. The Bayesian has a prior $\lambda_0=(\lambda^1_0, \dots, \lambda^K_0)$ with $\lambda^k_0 > 0$ for all $k$. 
Let $n_t^s$ be the number of times that state $s$ occurs by time $t$. The posterior probability of model $\theta^k$ is, by Bayes rule,
\begin{align}
\lambda^k_t &= \frac{\lambda^k_0\prod_s (\theta^k_s)^{n_t^s}}{\sum_l \lambda^l_0\prod_s (\theta^l_s)^{n_t^s}}.
\end{align}
Thus, for any models $k$ and $l$
\begin{align}
\frac{\lambda^k_t }{\lambda^l_t } &= \frac{\lambda^k_0\prod_s (\theta^k_s)^{n_t^s}}{\lambda^l_0\prod_s (\theta^l_s)^{n_t^s}},\\
\log\frac{\lambda^k_t }{\lambda^l_t } &= 
\log\frac{\lambda^k_0}{\lambda^l_0} + \sum_s n_t^s \log(\theta^k_s) - \sum_s n_t^s \log(\theta^l_s).\label{eq:log-odds-by-n}
\end{align}
Using the law of large numbers and the definition of relative entropy, we have, almost surely, 
\begin{align}\label{eq: bayes limit}
\lim_t \frac{1}{t} \log\frac{\lambda^k_t }{\lambda^l_t } = I_q(\theta^l) - I_q(\theta^k).
\end{align}

If model $k$ is correct, $\theta^k = q$, then $I_q(\theta^k)=0$, so for any other model $\theta^l$ with $ I_q(\theta^l) \neq 0$ we have $\lambda^l_t \to 0$ and $\lambda^k_t \to 1$ almost surely as $t \rightarrow \infty$. Equation (\ref{eq: bayes limit}) shows that the log of posterior probabilities converges linearly in time, and thus, that the ratio of posterior probabilities converges at an exponential rate. 

The Bayesian learns the true model if it is one of the models considered. Alternatively, suppose that no model considered by the Bayesian is correct, but that there is a unique model with minimum entropy relative to the true model; i.e., there exists $k^*$ such that $0<I_q(\theta^{k^*})<I_q(\theta^k)$ for all $k\neq k^*$. Then it follows from equation (\ref{eq: bayes limit}) that $\lambda_t^{k^*} \to 1$ almost surely as $t \rightarrow \infty$. 
These convergence results can be summarized in the following proposition. 
\begin{proposition}\label{thm:bayesian-convergence}
    Let $\Theta$ denote the support of a Bayesian learner's prior, and let 
    $
    \theta^* = \argmin_{\theta \in \Theta} I_q(\theta)
    $. 
    The Bayesian's posterior probability on $\theta^*$ converges to $1$ almost surely. 
\end{proposition}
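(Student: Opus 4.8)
The plan is to leverage the log-odds identity in Equation~\eqref{eq:log-odds-by-n} together with the almost-sure limit in Equation~\eqref{eq: bayes limit}, which is itself a direct consequence of the strong law of large numbers applied to the empirical state frequencies $n_t^s/t \to q_s$. Fix the minimizer $\theta^* \in \Theta$ (assume for the moment it is unique; I will address ties below). For any competing model $\theta^l$ with $\theta^l \neq \theta^*$, uniqueness of the minimizer gives $I_q(\theta^l) - I_q(\theta^*) > 0$, so Equation~\eqref{eq: bayes limit} yields $\tfrac{1}{t}\log(\lambda^{*}_t/\lambda^l_t) \to I_q(\theta^l) - I_q(\theta^*) > 0$ almost surely, hence $\log(\lambda^*_t/\lambda^l_t) \to +\infty$ and $\lambda^l_t/\lambda^*_t \to 0$ almost surely.

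Next I would convert these pairwise ratio statements into a statement about $\lambda^*_t$ itself. Since the posterior weights sum to one, $1/\lambda^*_t = \sum_{l=1}^K \lambda^l_t/\lambda^*_t = 1 + \sum_{l \neq *} \lambda^l_t/\lambda^*_t$. Each of the finitely many terms in the sum converges to $0$ almost surely, so the finite sum converges to $0$ almost surely (a finite union of probability-one events has probability one), giving $1/\lambda^*_t \to 1$, i.e. $\lambda^*_t \to 1$ almost surely. This is the desired conclusion.

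The main subtlety — and the step I would be most careful about — is the case where the minimizer of $I_q(\cdot)$ over $\Theta$ is not unique. If several models $\theta^{k_1}, \dots, \theta^{k_j}$ all attain the minimum entropy $I^* := \min_{\theta \in \Theta} I_q(\theta)$, then $\theta^*$ in the proposition statement is not well-defined by $\argmin$ alone; the natural reading is that the posterior mass on the \emph{set} of entropy-minimizers converges to $1$, or equivalently that the statement should be interpreted for a canonical choice (and Equation~\eqref{eq: bayes limit} only shows the ratio between two equal-entropy models stays at subexponential scale, not that it converges). For the argument above, what actually matters is that $\sum_{l : I_q(\theta^l) > I^*} \lambda^l_t \to 0$: each such "strictly worse" model has its ratio to any minimizer tending to zero, so the total weight off the minimizing set vanishes almost surely. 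I would state the proposition (as the authors do) under the standing assumption that the minimizer is unique, note that this holds generically, and restrict the proof to that case; the identity $I_q(q) = 0$ and the hypothesis $q \in \Theta$ in the earlier part of the section is the special instance where $\theta^* = q$ and $I^* = 0$. Beyond this, the proof is essentially a bookkeeping exercise and I do not anticipate further obstacles.
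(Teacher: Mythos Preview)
Your proposal is correct and follows essentially the same approach as the paper: the paper's proof is precisely the discussion preceding the proposition, which derives Equation~\eqref{eq: bayes limit} from Equation~\eqref{eq:log-odds-by-n} via the strong law of large numbers and then concludes $\lambda_t^{k^*} \to 1$ under the assumption of a unique entropy minimizer. If anything, you are slightly more explicit than the paper in spelling out the step $1/\lambda^*_t = 1 + \sum_{l \neq *} \lambda^l_t/\lambda^*_t \to 1$ and in flagging the non-uniqueness caveat, both of which the paper leaves implicit.
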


The Bayesian's time $t$ predicted probability on states is $q_t = \sum_k \lambda^k_{t-1} \theta^k$. If one of the models the Bayesian considers is correct, then this predicted probability converges at an exponential rate to $q$ .

\subsection{An Investment Rule for the Bayesian}

We next generate an optimal investment rule for the Bayesian from the predicted probability on states. A Bayesian trader $n$ with predicted probability $q^n_t$ who wants to maximize the expected growth rate of wealth selects at time $t$ an investment rule that solves: 
\begin{equation} \label{eq:bays-investment-rule-w}    
\max_{\alpha^n_{t}\geq 0} \sum_s {q^n_{st}} \log\frac{w^n_{t}}{w^n_{t-1}},  \ \quad 
s.t.  \sum_s \alpha^n_{st} = 1 \notag.  
\end{equation}
Using the definition of wealth, we have 
\begin{equation}\label{eq:bays-investment-rule-p}
\max_{\alpha^n_{t}\geq 0} \sum_s {q^n_{st}} \log\frac{\alpha^n_{st}}{p_{st}},  \ \quad
s.t. \sum_s \alpha^n_{st} = 1. \notag
\end{equation}
\noindent The solution is clearly $\alpha^n_t = q^n_t$. The Bayesian agent should ``bet its beliefs.'' 

\vspace{5pt}
\noindent
{\em Remark:} An alternative foundation for this investment rule can be derived from a Bayesian whose objective is to maximize its expectation of discounted sum of log utility of consumption. Suppose the agent begins with some wealth $w^n_0$, prior $\lambda^n_0$ on models, and discount factor $0 < \delta < 1$. If there is a complete set of Arrow securities at each time, this agent will optimally invest a fraction $\delta$ of its wealth at each time, deriving utility from consuming the rest, and use the investment rule $q^n_t$. Assuming that all agents have the same discount factor yields results identical to those for an economy where agents maximize expected growth rate of wealth.

\subsection{Wealth Dynamics for the Bayesian}
A Bayesian agent that maximizes the expected growth rate of its wealth uses the investment rule $\alpha^n_t=q^n_t$ at each time $t$ (i.e., betting its beliefs). We consider the evolution of the wealth of the Bayesian relative to any other trader $m$ who uses an investment rule that does not depend on future states, i.e., its time-$t$ value depends only on state realizations through time $t-1$.   
The log ratio of their wealths is, from Section \ref{sec:model},
  \begin{align}\label{eq:log-ratio-of-wealth}
	\log(r^{nm}_{T})   &= \sum_{t=1}^{T}\sum_{s=1} ^S \mathbf{1}_{s_t = s} \log \left(\frac{\alpha^n_{st}}{q_s}\right) \\ \notag
    & - 
    \sum_{t=1}^{T}\sum_{s=1} ^S \mathbf{1}_{s_t = s} \log \left(\frac{\alpha^m_{st}}{q_s}\right) +
    \log(r^{nm}_{0}). 
\end{align}
Note that 
\begin{align}
\sum_{t=1}^{T} \ 
&\E\big[ 
\log \big(\frac{\alpha^n_{st}}{q_s}\big)
\big]
-
\E\big[ \log \big(\frac{\alpha^m_{st}}{q_s}\big)
\big]\\
&=
\sum_{t=1}^{T} I_q(\alpha^m_t)-\sum_{t=1}^{T} I_q(\alpha^n_t).
\end{align}

If one of the models that the Bayesian considers is the correct model, $q$, then $I_q(\alpha^n_t)$ converges exponentially to $0$ almost surely. So in this case,  $\sum_{t=1}^{T} I_q(\alpha^n_t)$ is finite almost surely. The relative entropy $I_q(\alpha^m_t)$ is at least $0$ for any investment rule and is $0$ if $\alpha^m_t=q$ for all $t$. Thus, the sums of relative entropies are at least equal to the negative of the finite sum of relative entropies for the Bayesian. 

Subtracting the means from the sums in Equation (\ref{eq:log-ratio-of-wealth}) yields a martingale, and the arguments in \cite{blumeeasley1992} imply the following conclusion:
\begin{proposition}\label{thm:Bayesians-suvive-against-q}
    Let $r^{nm}_T$ be the wealth ratio of a Bayesian learner indexed by $n$ competing against an agent indexed by $m$ playing $\alpha^m_t=q$ for all $t$. Then 
    $
    \liminf_{T \rightarrow \infty} r^{nm}_T > 0
    $
    almost surely; i.e., the Bayesian survives.
\end{proposition}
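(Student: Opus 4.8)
Since the competitor $m$ uses the constant rule $\alpha^m_t = q$ at every step, the second sum in Equation~(\ref{eq:log-ratio-of-wealth}) vanishes and, with $\alpha^n_t = q^n_t$ the Bayesian's predicted distribution, it collapses to
\[
\log r^{nm}_T \;=\; \log r^{nm}_0 \;+\; \sum_{t=1}^T \sum_{s=1}^S \mathbf 1_{s_t = s}\log\frac{q^n_{st}}{q_s} \;=\; \log r^{nm}_0 \;-\; \sum_{t=1}^T Y_t, \qquad Y_t := \sum_{s} \mathbf 1_{s_t=s}\log\frac{q_s}{q^n_{st}}.
\]
The plan is to show that $\sum_t Y_t$ converges almost surely to a \emph{finite} limit; this forces $r^{nm}_T$ to converge a.s. to a strictly positive limit, hence $\liminf_{T\to\infty} r^{nm}_T > 0$ a.s., which is exactly survival.

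First I would set up the Doob decomposition. Let $\mathcal F_{t-1} := \sigma(s_1,\dots,s_{t-1})$; the Bayesian's investment rule used in period $t$ depends only on past state realizations, so $q^n_t$ is $\mathcal F_{t-1}$-measurable, while $s_t\sim q$ independently of $\mathcal F_{t-1}$. Hence $\E[Y_t\mid\mathcal F_{t-1}] = \sum_s q_s\log(q_s/q^n_{st}) = I_q(q^n_t)\ge 0$, so $\sum_t Y_t = \sum_t I_q(q^n_t) + M_T$, where $\sum_t I_q(q^n_t)$ is a nonnegative predictable series and $M_T = \sum_{t\le T}\bigl(Y_t - \E[Y_t\mid\mathcal F_{t-1}]\bigr)$ is a martingale with increments bounded by $2\log(1/\minProb)$, since every coordinate of $q^n_t$ and of $q$ is at least $\minProb$.

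Next I would bring in the exponential speed of Bayesian learning. Because one of the Bayesian's models is correct, say $\theta^{k^*}=q$, Equation~(\ref{eq: bayes limit}) (equivalently Proposition~\ref{thm:bayesian-convergence}) gives $\lambda^k_t\to 0$ at an exponential rate for every $k$ with $I_q(\theta^k)>0$, so $q^n_t = \sum_k\lambda^k_t\theta^k \to q$ exponentially fast, almost surely. Since all distributions involved stay bounded away from $0$ by $\minProb$ and the first-order term in $I_q(q^n_t)$ cancels (the coordinates of $q^n_t - q$ sum to zero), both $I_q(q^n_t)$ and the conditional variance $\operatorname{Var}(Y_t\mid\mathcal F_{t-1}) = \sum_s q_s(\log(q_s/q^n_{st}))^2 - I_q(q^n_t)^2$ are $O(\lVert q^n_t - q\rVert^2)$ and hence decay exponentially; in particular $\sum_t I_q(q^n_t)<\infty$ and $\sum_t \operatorname{Var}(Y_t\mid\mathcal F_{t-1})<\infty$ almost surely. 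Summable conditional variances imply $M_T$ converges a.s. (the standard quadratic-variation / $L^2$ martingale convergence theorem), and the nonnegative series $\sum_t I_q(q^n_t)$ converges a.s.; therefore $\sum_t Y_t$ converges a.s. to a finite limit, $\log r^{nm}_T$ converges a.s. to a finite limit, and $r^{nm}_T\to r^{nm}_\infty\in(0,\infty)$ a.s.

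\textbf{Main obstacle.} The nontrivial content is the \emph{positivity} of the limit, and this is precisely where the exponential rate of Bayesian convergence is essential rather than cosmetic: an investor whose beliefs merely converge to $q$ could still have $\sum_t I_q(\alpha_t)=\infty$ and vanish. So the delicate step is to extract from Equation~(\ref{eq: bayes limit}) the almost-sure summability of \emph{both} the drift series $\sum_t I_q(q^n_t)$ and the conditional-variance series $\sum_t\operatorname{Var}(Y_t\mid\mathcal F_{t-1})$, in a form that can be fed to the martingale convergence theorem, together with the routine bookkeeping that the rule is predictable and the $\minProb$-lower bound keeps all logarithms bounded. (One could instead observe that $r^{nm}_T$ is a nonnegative martingale, since $\E[q^n_{s_t,t}/q_{s_t}\mid\mathcal F_{t-1}] = \sum_s q^n_{st}=1$, and invoke Doob to get a.s. convergence to some $r^{nm}_\infty\ge 0$; but ruling out $r^{nm}_\infty=0$ still requires exactly this exponential-convergence input.)
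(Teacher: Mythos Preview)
Your proposal is correct and follows essentially the same approach as the paper: decompose $\log r^{nm}_T$ into the predictable drift $\sum_t I_q(q^n_t)$ plus a martingale, use the exponential speed of Bayesian convergence (Equation~(\ref{eq: bayes limit})) to get almost-sure summability of the drift, and invoke martingale convergence. The paper only sketches this and defers the martingale details to \cite{blumeeasley1992}, whereas you spell out the conditional-variance summability and the $L^2$ convergence step explicitly; your parenthetical alternative via the nonnegative martingale $r^{nm}_T$ is also sound and identifies correctly that ruling out a zero limit is where the exponential rate is indispensable.
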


Note that if the correct model is not in the set of models considered by the Bayesian, then the predicted distribution converges to the model closest in relative entropy to $q$, and the sum of relative entropies diverges. In this case, if some investor, $m$, uses  $\alpha^m_t = q$ for all $t$, then the Bayesian's wealth converges almost surely to $0$. 

\vspace{5pt}
\noindent
{\em Remark:}
In this paper we consider Bayesian learners with a finite set of models. The Bayesian analysis can be extended to priors with a continuum of models in their support, see \cite{clarkebarron1990}. In this case, learning occurs at rate $O((\log T)/T)$, where the constant depends on the dimension of the set of models considered. We observe that this convergence rate yields a regret of $O(\log T)$ at time $T$. This implies, by our Theorems \ref{thm:survival-by-finite-regret-gap} and \ref{thm:regret-of-q}, that a Bayesian learner with continuum prior does not survive in a market when competing with an investor who knows the true underlying process, or with a Bayesian with a finite support. The result about the effect of the dimension of the support of the Bayesian's prior is known in the economics literature, see \cite{blumeeasley2006}, but the approach there does not use a measure of regret. 

Thus, such a Bayesian will have regret of the same order of magnitude as a no-regret learner. In fact, there is an interesting connection between Bayesian learning with a convex continuum support and no-regret learning using the exponential weights algorithm. One can think of the classical exponential-weights algorithm (as used in Theorem 3.3 of \cite{cesa2006prediction}  or Theorem 4.4 of \cite{hazan2016introduction}, when choosing their parameters $\eta$, $\alpha$, respectively, as $\eta=\alpha=1$) as doing Bayesian learning with a uniform prior over the convex set of distributions on states. It is interesting to note that for a $d$-dimensional convex prior, \cite{clarkebarron1990} show convergence at the rate of $\frac{d}{2}(\log T)/T$, resulting in regret $\frac{d}{2} \log T$ (though they do not discuss regret). By contrast, no-regret learning results prove a $d \log T$ regret bound. While this second bound is a factor of 2 worse, note that the no-regret result is worst case: it does compare to the best single investment strategy in hindsight, but does not assume a steady process underlying the market. Since the two algorithms follow the same investment strategy, the no-regret results prove that Bayesian learning over a convex continuum prior of dimension $d$ also guarantees a worst-case regret of at most $d \log T$ against the best fixed strategy in hindsight, even if the Bayesian's  assumption of a steady underlying process turns out incorrect.

\section{The Regret of Bayesian Learners}\label{sec:no-regret}

In this section we further analyze the relationship between an agent's  regret  and long-term survival in market dynamics, and use this analysis to derive the regret level of Bayesian learners. This relationship depends on the competition. In particular, we saw in Theorem \ref{thm:survival-by-finite-regret-gap} that the learners who survive (have a positive expected wealth level in the limit) are only those who maintain a constant regret gap relative to the best competitor in the market. 

As mentioned in Section \ref{sec:compare}, using the state distribution $q$ as the investment rule (if $q$ were known)  maximizes the expected growth rate among constant strategies. However, even this strategy incurs some regret, as it generally differs from the best strategy in hindsight. We now ask:  what is the regret level of using the true state distribution $q$ as the investment rule? This regret level represents the best attainable expected regret. Thus, by Theorem \ref{thm:survival-by-finite-regret-gap}, evaluating this regret level would serve as the benchmark for determining which regret levels ensure survival (see Definition \ref{def:survive-and-vanish}) against any competitors who do not have information about future state realizations. 
To set the ground, let us first consider the following definition:

\begin{definition}
Fix an arbitrary history of state realizations $s_1,\dots,s_T$ and denote the empirical distribution of this history by $\hat{q}$, where $\hat{q}_s = \frac{1}{T}\sum_{t=1}^T \ind{s_t=s}$. A {\em ``magic agent,''} indexed also by $\hat{q}$, is an agent playing the (eventual) empirical distribution in every step. That is,  $\alpha^{\hat{q}}_{st} = \hat{q}_s$ for all $t \in [T]$.     
\end{definition}

Note that this agent uses information regarding future realizations of random states, which is not available to real agents in a stochastic environment. Next, we derive the regret level of using the correct distribution of states $q$ as the investment strategy.
The proof is given in Appendix \ref{sec:appendix-proofs}; it uses our analysis from Section~\ref{sec:compare} and a result from information theory on the relative entropy between the underlying distribution $q$ and the empirical realization of states.  

\begin{theorem}\label{thm:regret-of-q}
    An agent using the state distribution $q$ as the investment strategy for all $t$ has  expected regret bounded by a constant $R$ that depends only on the number of states $S$. 
\end{theorem}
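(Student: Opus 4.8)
The plan is to compute the expected regret of the agent who always plays $\alpha_t = q$ directly from the regret definition, using the fact established in Section~\ref{sec:compare} that the best fixed rule in hindsight, given a history $s_1,\dots,s_T$ with empirical distribution $\hat q$, is exactly the ``magic agent'' playing $\hat q$. First I would write
\[
R^T(q_{1:T}) \;=\; \sum_{t=1}^T U^t(\hat q) - \sum_{t=1}^T U^t(q)
\;=\; \sum_{s} n_T^s \log \hat q_s - \sum_{s} n_T^s \log q_s
\;=\; T \sum_s \hat q_s \log\frac{\hat q_s}{q_s}
\;=\; T\, I_{\hat q\| q},
\]
where $n_T^s = \sum_{t=1}^T \ind{s_t=s} = T\hat q_s$ and $I_{\hat q\|q} = \sum_s \hat q_s \log(\hat q_s/q_s)$ is the KL divergence of the empirical distribution from the true one. (To see that $\hat q$ maximizes $\sum_t U^t(\alpha)=\sum_s n_T^s\log\alpha_s$ over the simplex, note this is the log-likelihood of a multinomial, maximized at the empirical frequencies — a one-line Lagrange-multiplier argument.) So the entire task reduces to showing that $\E\big[\,T\, I_{\hat q\|q}\,\big]$ converges to a constant depending only on $S$.

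The second step is the information-theoretic estimate. Here $T\hat q = (n_T^1,\dots,n_T^S)$ is a $\mathrm{Multinomial}(T,q)$ vector, and the relevant fact is the classical asymptotic expansion of the expected KL divergence between an empirical distribution and the generating distribution: $\E[\,T\, I_{\hat q\|q}\,] \to \frac{S-1}{2}$ as $T\to\infty$. This is the Wilks/chi-square phenomenon — $2T\,I_{\hat q\|q}$ is asymptotically $\chi^2_{S-1}$ distributed (a second-order Taylor expansion of KL around $q$ gives $T\,I_{\hat q\|q}\approx \frac{1}{2}\sum_s \frac{(n_T^s - Tq_s)^2}{Tq_s}$, whose expectation is $\frac{S-1}{2}$ since each normalized squared deviation contributes $1$ minus a correction summing to $1$), together with a uniform-integrability argument to pass from convergence in distribution to convergence of expectations. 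Since $q_s \ge \Delta > 0$ for all $s$, the moments are uniformly controlled, so uniform integrability holds. This yields $R := \lim_{T\to\infty}\E[R^T(q_{1:T})] = \frac{S-1}{2}$, a constant depending only on $S$.

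The main obstacle is the uniform-integrability / moment-control step needed to justify $\E[T I_{\hat q\|q}]\to \frac{S-1}{2}$ rigorously rather than merely the distributional limit; one must either cite a precise reference for the expected redundancy of the multinomial (this is standard in the universal-coding literature, e.g.\ the minimax redundancy results that the paper already gestures toward via \cite{clarkebarron1990}) or bound the tail of $I_{\hat q\|q}$ directly using a concentration inequality for the empirical distribution (Sanov-type bounds, valid because $\Delta$ bounds the probabilities away from zero). A minor additional point: the statement claims constant expected regret, so one should also confirm that $\E[T I_{\hat q\|q}]$ is bounded for all finite $T$ (not just in the limit) — this follows from the same tail bounds, since $I_{\hat q\|q} \le \log(1/\Delta)$ deterministically, giving the crude bound $\E[R^T] \le T\log(1/\Delta)$ for small $T$ and the sharp constant for large $T$; in fact a cleaner route is to show $\E[TI_{\hat q\|q}]$ is monotone or simply uniformly bounded via the deterministic bound combined with the asymptotic one. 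Finally, I would record the corollary noted in the excerpt: combining Theorem~\ref{thm:regret-of-q} with Theorem~\ref{thm:survival-by-finite-regret-gap}, any agent surviving against the $\alpha_t=q$ investor must have regret bounded by a constant for all $T$.
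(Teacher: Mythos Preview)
Your proposal is correct and follows essentially the same route as the paper: express the regret of the $q$-investor as $T\cdot I_{\hat q\|q}$ (the paper reaches this via the wealth-ratio identity, you via the direct regret computation, but these are the same line) and then invoke the information-theoretic estimate $\E[I_{\hat q\|q}]=\tfrac{S-1}{2T}+o(1/T)$. The only difference is that the paper simply cites this estimate (Lemma~13.2 in \cite{polyanskiy2024information}), whereas you sketch the Wilks/$\chi^2_{S-1}$ argument and flag the uniform-integrability step; your extra care there is not wrong but is more than the paper itself provides.
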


Our analyses above and in the preceding sections lead to the following characterization of the regret of Bayesian learners and the survival conditions for other learning agents competing with them in the market. The proof is given in Appendix \ref{sec:appendix-proofs}.

\begin{theorem}\label{thm:regret-of-a-Bayesian}
    A Bayesian learner with a finite-support prior that includes the correct state distribution $q$ satisfies the following for any market parameters:
    \begin{enumerate}
        \item Its expected regret is bounded by a constant at all times.
        \item Its pathwise regret difference relative to an agent playing $q$ is bounded from above.
    \end{enumerate}
\end{theorem}

The next lemma provides an expression for the regret of a player as a function of the entropy relative to strategy $q$, using our notation $R$ from Theorem \ref{thm:regret-of-q} for the expected regret of strategy $q$. This expression will be useful in the following Section on imperfect Bayesians.

\begin{lemma}\label{thm:expected-regret-lemma}
    The expected regret of a constant strategy sequence (that is, $\alpha^n_{1:T}$ such that $\alpha^n_t =  \alpha^n$ for all $t$) is given by 
    $
    \E[R^T(\alpha^n_{1:T})] = T \cdot I_q(\alpha^n) + R.
    $ 
\end{lemma}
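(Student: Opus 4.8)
The plan is to start from the identity relating wealth ratios and regret, Equation~(\ref{eq:logshare-as-regret}), applied to the pair consisting of agent $n$ (playing the constant rule $\alpha^n$) and the agent who plays the true distribution $q$ at every step. Writing $R^T(\alpha^n_{1:T})$ and $R^T(q_{1:T})$ for their respective regrets against the best fixed rule in hindsight, Equation~(\ref{eq:logshare-as-regret}) gives $\log(r^{nm}_T) = R^T(q_{1:T}) - R^T(\alpha^n_{1:T}) + \log(r^{nm}_0)$ for this choice of $n$ and $m$. Rearranging, $R^T(\alpha^n_{1:T}) = R^T(q_{1:T}) - \log(r^{nm}_T) + \log(r^{nm}_0)$. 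Taking expectations, $\E[R^T(q_{1:T})] = R$ by Theorem~\ref{thm:regret-of-q}, so it remains to compute $\E[\log(r^{nm}_T)] - \log(r^{nm}_0) = \E\big[\sum_{t=1}^T\sum_s \ind{s_t=s}\log(\alpha^n_s/q_s)\big]$.

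Next I would evaluate that expectation directly. Since the states are i.i.d. with $\Pr[s_t=s]=q_s$ and $\alpha^n$ is constant (in particular independent of the realized states), linearity of expectation gives $\E\big[\sum_{t=1}^T\sum_s \ind{s_t=s}\log(\alpha^n_s/q_s)\big] = T\sum_s q_s \log(\alpha^n_s/q_s) = -T\, I_q(\alpha^n)$, using the definition of relative entropy. Substituting back yields $\E[R^T(\alpha^n_{1:T})] = R + T\, I_q(\alpha^n)$, which is exactly the claimed expression.

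I do not expect a serious obstacle here: the result is essentially a bookkeeping consequence of the regret--wealth identity (\ref{eq:logshare-as-regret}) together with Theorem~\ref{thm:regret-of-q} and the elementary expectation computation. The only point requiring a little care is the independence used in the expectation step: because the strategy is constant rather than adaptive, $\alpha^n$ does not depend on the sample path, so one may pass the expectation inside without a Jensen-type loss (contrast the footnote after Equation~(\ref{eq:logshares}), which flags precisely this issue for randomized or adaptive rules). One should also note that $\log(r^{nm}_0)$ is a deterministic constant and cancels, and that $q_s,\alpha^n_s \ge \minProb > 0$ ensures all logarithms are finite, so no integrability concerns arise.
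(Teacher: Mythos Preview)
Your proposal is correct and takes essentially the same approach as the paper: both decompose $R^T(\alpha^n_{1:T})$ into the regret of the $q$-player (which is $R$ in expectation by Theorem~\ref{thm:regret-of-q}) plus the expected utility gap between $q$ and $\alpha^n$, and both evaluate the latter as $T\cdot I_q(\alpha^n)$ via the elementary expectation of $\sum_t\sum_s \ind{s_t=s}\log(q_s/\alpha^n_s)$. The only cosmetic difference is that you route the decomposition through the wealth-ratio identity~(\ref{eq:logshare-as-regret}) whereas the paper writes it directly as a telescoping utility difference $U(\hat q)-U(\alpha^n)=[U(\hat q)-U(q)]+[U(q)-U(\alpha^n)]$; these are the same computation.
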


\section{Imperfect Bayesians}\label{sec:imperfect-Bayesians}
In the preceding sections, we saw that a perfect Bayesian is optimal in the sense that it survives almost surely in the market and drives out any player with regret increasing over time. In this section, we explore a scenario where a Bayesian learner suffers small errors, either in its set of prior distributions or in its update rule. We find that, while perfect Bayesians are a powerful benchmark, Bayesian learning is also very fragile; for example, it is key that the correct probability distribution is one of the models they consider. Specifically, we show that an imperfect Bayesian would eventually vanish from the market, either against the state distribution $q$ or in competition with any player with sublinear regret. This holds even if the Bayesian's errors are very small and have zero mean.
 
\subsection{Bayesian Learning with Inaccurate Priors}\label{sec:noisy-priors-Bayesian}

As mentioned in Section \ref{sec:Bayesians}, a Bayesian learner with $q$ not within the support of its prior converges to using the best strategy within its prior (i.e., the one closest to $q$ in relative entropy), denoted here by $q'$. 
This convergence property leads to the following result. 
\begin{theorem}\label{thm:wrong-Bayesians-vanish}
    A Bayesian agent with the state distribution $q$ not in its support incurs regret 
    linear in $T$, and vanishes from the market in competition against any no-regret learner.   
\end{theorem}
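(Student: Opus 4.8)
The plan is to combine Proposition~\ref{thm:bayesian-convergence} with Lemma~\ref{thm:expected-regret-lemma} and Theorem~\ref{thm:survival-by-finite-regret-gap}. Since $q \notin \Theta$, Proposition~\ref{thm:bayesian-convergence} gives that the Bayesian's posterior concentrates almost surely on $\theta^* = \argmin_{\theta \in \Theta} I_q(\theta)$ with $I_q(\theta^*) > 0$, and hence its predicted distribution $q^n_t = \sum_k \lambda^k_t \theta^k \to \theta^*$ almost surely. The first step is to make this quantitative at the level of regret: I would argue that $\sum_{t=1}^T I_q(q^n_t) = (1 - o(1)) \, I_q(\theta^*) \, T$ almost surely, so that the per-step entropy gap does not vanish. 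Using Equation~(\ref{eq:log-ratio-of-wealth}) applied with the $q$-investor as agent $m$ (whose per-step entropy term is identically zero), the expected log wealth ratio of the Bayesian against the $q$-investor is $-\sum_{t=1}^T I_q(q^n_t) + \log r^{nm}_0 = -\Theta(T)$, and subtracting means yields a martingale, so by the same Borel--Cantelli / martingale-convergence argument used for Proposition~\ref{thm:Bayesians-suvive-against-q} (now with a divergent rather than convergent entropy sum), the Bayesian's wealth ratio against the $q$-investor tends to $0$ almost surely. In regret language, this says $R^{\text{Bayes}}(T) = \Theta(I_q(\theta^*) \, T) + O(1)$ almost surely, i.e. linear regret.

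Next I would handle the comparison against an arbitrary no-regret learner $m$. By definition $R^m(T) = o(T)$ (in fact the excerpt notes $O(\log T)$ or $O(\sqrt T)$ bounds, but sublinearity is all that is needed, and for an adversarial-style guarantee this holds surely, not just in expectation). By Equation~(\ref{eq:logshare-as-regret}), $\log r^{mn}_T = R^{\text{Bayes}}(T) - R^m(T) + \log r^{mn}_0 = \Theta(T) - o(T) \to \infty$ almost surely, so $r^{nm}_T \to 0$, and equivalently the difference of regrets $R^{\text{Bayes}}(T) - R^m(T) \to \infty$; by Theorem~\ref{thm:survival-by-finite-regret-gap} the Bayesian vanishes. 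One technical point to be careful about: Theorem~\ref{thm:survival-by-finite-regret-gap} is phrased for deterministic regret sequences, so I would either invoke it pathwise on the almost-sure event where $R^{\text{Bayes}}(T)$ is linear, or re-derive vanishing directly from $\E[w^n_T] \to 0$ using the martingale bound plus the boundedness of aggregate wealth noted in the model section.

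The main obstacle is the quantitative lower bound $\sum_{t=1}^T I_q(q^n_t) = \Omega(T)$ almost surely. The subtlety is that $I_q(\cdot)$ is evaluated at the \emph{mixture} $q^n_t$, not at $\theta^*$, and although $q^n_t \to \theta^*$ the convergence happens only after a random but almost surely finite time, and early terms could in principle be small (even zero, if $q$ happened to be a convex combination of the $\theta^k$'s — but it is not, since $\theta^*$ is the unique closest point and $I_q$ is strictly convex so $I_q(\theta^*)>0$ rules this out). I would resolve this by fixing $\varepsilon \in (0, I_q(\theta^*))$, using continuity of $I_q$ to find a neighborhood of $\theta^*$ on which $I_q > I_q(\theta^*) - \varepsilon$, invoking almost-sure convergence $q^n_t \to \theta^*$ to get a random time $T_0$ after which all terms exceed $I_q(\theta^*) - \varepsilon$, and then lower-bounding $\sum_{t=1}^T I_q(q^n_t) \ge (T - T_0)(I_q(\theta^*) - \varepsilon)$; dividing by $T$ and letting $T \to \infty$ then $\varepsilon \to 0$ gives the claim. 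A secondary obstacle is bookkeeping the two notions of ``regret'' (against $q$ versus against the hindsight optimum): since regret against the hindsight optimum dominates regret against $q$, a linear lower bound on the latter suffices for the vanishing conclusions, so I would state the argument in terms of the $q$-benchmark and invoke this domination once.
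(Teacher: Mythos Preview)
Your proposal is correct and follows essentially the same route as the paper: both use Proposition~\ref{thm:bayesian-convergence} to get $q^n_t \to \theta^*$ almost surely, invoke continuity of $I_q$ together with a random threshold time (the paper's $I_0 = \tfrac{1}{2}I_q(q')$ plays exactly the role of your $I_q(\theta^*)-\varepsilon$) to obtain $\sum_t I_q(q^n_t) = \Omega(T)$, upgrade this to linear hindsight regret via the constant regret of the $q$-investor, and conclude vanishing against any no-regret learner through Theorem~\ref{thm:survival-by-finite-regret-gap}. One inconsequential slip: your parenthetical that $I_q(\theta^*)>0$ rules out $q \in \mathrm{conv}(\Theta)$ is false (take $\Theta=\{(0.1,0.9),(0.9,0.1)\}$ and $q=(0.5,0.5)$), but this does not matter since your $T_0$ argument already covers that case without relying on the claim.
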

Note that this contrasts with the case of Theorem \ref{thm:regret-of-a-Bayesian}, where a perfect Bayesian dominates the market in competition with any learners with regret increasing in time.
The proof is in Appendix \ref{sec:appendix-proofs}.

\vspace{5pt}
\noindent
{\em Remark:}
The possible strategies that the Bayesian can play span the entire convex hull of the prior. 
Bayesians with a finite prior can (and do) play convex combinations of the prior during the dynamic, but still, they always converge to the vertices---even when a more profitable strategy lies in the interior. 
In this sense, this convergence property of Bayesians is both a strength (if they have an accurate prior) and a weakness (if they do not). 
\vspace{5pt}

\bfpar{Typical Survival Time}
A question that arises now is what happens when errors are small.  
In the following analysis, we consider ``$\epsilon$-inaccurate Bayesians'' who have in their prior a close approximation of $q$, but still with some small error. Specifically, the total variation distance between the best strategy in the prior and the state distribution $q$ satisfies $TV(q', q) = \epsilon$ for some $\epsilon > 0$. 

On the one hand, we know that such an agent asymptotically fails to survive against a regret-minimizing agent for any error $\epsilon$. On the other hand, when $\epsilon$ is small, the inaccurate Bayesian will initially converge very quickly to a distribution close to the correct state distribution, and may capture a significant share of the market value during the early stages. 

Our goal now is to analyze the survival time during which the inaccurate Bayesian may still maintain a significant market share, and to understand how this survival time is related to the level of error $\epsilon$. This, of course, also depends on the best competitor and how fast they converge.

To estimate this relation and provide an upper bound on the survival time of an inaccurate Bayesian learner, we consider a player using $q'$ with $TV(q', q) = \epsilon$ (having already converged to this distribution), competing against a second player playing a dynamic strategy with expected regret level increasing as $f(t)$ that is sub-linear in $t$.

First, we would like to express the first player's error in terms of entropy (i.e., KL-divergence with $q$). We can bound the entropy using Pinsker's inequality~\citep{pinsker1964information} for a lower bound and its inverse for an upper bound, were the latter holds for finite distributions with full support, as in our case. We have (recall that $\Delta = \min_s q_s$),
\[
2\epsilon^2 \leq 
I_q(q')
\leq \frac{2}{\Delta} \epsilon^2.
\]
Using Lemma \ref{thm:expected-regret-lemma}, this can be translated into the regret of the inaccurate agent playing strategy $q'$ where $R$ is the regret of an investor using the true state distribution $q$ (see Theorem \ref{thm:regret-of-q}): 
\[
I_q(q')
= \frac{1}{T} \big(\mathbb{E}[R^{T}(q'_{1:T})] - R \big).
\]

To maximize the survival time, we pick the distribution $q'$ to be the one with the smallest regret, which is when $I_q(q')$ is the smallest possible: $I_q(q') = 2\epsilon^2$. So we have\footnote{If the inaccurate agent uses the worst $q'$ with error $\epsilon$, the regret calculation is similar, but with $\epsilon$ rescaled by a factor of $1/\sqrt{\Delta}$, using the inverse of Pinsker's inequality: 
$2 \epsilon^2 \tau/\Delta  + R$. The smallest-regret  $q'$ is used for the bound.} 
$\mathbb{E}[R^{T}(q'_{1:T})] = 2\epsilon^2 T + R$. 
By Equation (\ref{eq:logshare-as-regret}), the regret difference between two players is equal to the log of their wealth ratios  plus a constant. By comparing regret levels between the agents, we obtain a bound on the typical survival time $\tau$ up to which the inaccurate player holds, in expectation, more than half the market value; beyond this point, their expected share is less and continues decaying to zero:\footnote{An alternative question one may ask is: how accurate does my prior need to be to allow retaining a share of the market for $T$ steps against the competitor? For this, one can invert the equation and get: $\epsilon = \sqrt{(f(T) - R)/T}$.}
\[
f(\tau) = 2\epsilon^2 \tau + R.
\]

\vspace{5pt}
\noindent
{\em Remark:} Note that $f(t)$ is the competitor's actual regret. Regret bounds for known learning algorithms (e.g., bandit algorithms such as UCB (see, e.g.,   \cite{slivkins2019introduction}) or gradient-descent and second-order methods like those described in \cite{hazan2016introduction}) are typically worst-case bounds. Estimating the expected regret (or the most likely one) in game dynamics in general, or specifically in our investment scenario, is an interesting open problem.
\vspace{5pt}

For the special cases that the competitor has constant regret (for example, an accurate Bayesian), or the competitor has a $\log T$ or $\sqrt{T}$ regret level, we get the following results. 
\begin{observation}\label{obs:survival-time-gainst-constant-regret}
    The expected survival time of a Bayesian learner with $\epsilon$-inaccurate prior against any player with constant regret (e.g., a perfect Bayesian) is $O\big(\frac{1}{\epsilon^2}\big)$. 
\end{observation}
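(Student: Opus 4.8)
The plan is to simply instantiate the general survival‑time relation $f(\tau) = 2\epsilon^2\tau + R$ established in the discussion preceding the statement, for the special case in which the competitor's regret function $f$ is bounded by a constant. Recall how that relation is obtained: modeling the $\epsilon$‑inaccurate Bayesian (after its fast initial convergence) as the constant strategy $q'$ with $I_q(q') = 2\epsilon^2$, Lemma~\ref{thm:expected-regret-lemma} gives its expected regret at time $t$ as exactly $2\epsilon^2 t + R$, while the competitor's regret at time $t$ is $f(t)$. By Equation~(\ref{eq:logshare-as-regret}), the (expected) log wealth ratio of the Bayesian over the competitor equals $f(t) - (2\epsilon^2 t + R)$ up to the fixed additive constant $\log r_0^{nm}$, so the crossover time $\tau$ past which this drift turns and stays negative is the solution of $f(\tau) = 2\epsilon^2\tau + R$ (with $\log r_0^{nm}$ absorbed into $f$ or into the constants).

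First I would record that ``constant regret'' means there is a constant $C$, independent of $t$ and of $\epsilon$ (it may depend on the number of states $S$ and on the competitor; e.g.\ $C = \Theta(R)$ for a perfect Bayesian by Theorem~\ref{thm:regret-of-a-Bayesian}), with $f(t) \le C$ for all $t$. Substituting into the crossover equation and solving for $\tau$,
\[
\tau \;=\; \frac{f(\tau) - R}{2\epsilon^2} \;\le\; \frac{C - R}{2\epsilon^2} \;=\; O\!\left(\frac{1}{\epsilon^2}\right),
\]
and if $C \le R$ the crossover never occurs for positive $t$ — the inaccurate Bayesian never holds more than half the market in expectation, so the survival time is $0$, which also satisfies the bound. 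That is the whole of the estimate; the order in $\epsilon$ comes purely from the $2\epsilon^2\tau$ term dominating the constants $R$ and $C$.

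The one place needing care — the main, and fairly minor, obstacle — is the passage from ``the expected log wealth ratio changes sign'' to ``the expected wealth share drops below $1/2$,'' since $w^n = r^{nm}/(1+r^{nm})$ is concave in the ratio and $\E[\log r^{nm}] \ne \log \E[r^{nm}]$. I would treat this exactly as in the proofs of Theorem~\ref{thm:survival-by-finite-regret-gap} and Proposition~\ref{thm:Bayesians-suvive-against-q}: subtracting means in Equation~(\ref{eq:log-ratio-of-wealth}) yields a martingale, so the sign of the drift term $\E[R^T(\alpha^m)] - \E[R^T(\alpha^n)]$ governs whether $\E[w^n_T]$ stays bounded away from $0$ or decays to $0$, which is precisely why the $\tau$ defined above is the relevant threshold. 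A second, even smaller point is that the true $\epsilon$‑inaccurate Bayesian only approaches $q'$ rather than playing it from step $1$; since its posterior weight on $q'$ converges exponentially (Equation~(\ref{eq: bayes limit})), the extra regret accrued during this burn‑in is an almost‑surely finite additive term that folds into the constants and does not affect the $O(1/\epsilon^2)$ order.
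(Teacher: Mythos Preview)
Your proposal is correct and matches the paper's approach exactly: the observation is stated without a separate proof precisely because it follows immediately from the crossover equation $f(\tau) = 2\epsilon^2\tau + R$ by substituting a constant $f$, and your derivation $\tau \le (C-R)/(2\epsilon^2) = O(1/\epsilon^2)$ is all that is required. The additional care you take with the Jensen gap and the burn-in period is thoughtful but goes beyond what the paper itself argues for this observation-level statement.
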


For a competitor with logarithmic regret, we get an equation of the form $\tau = c_1 \cdot \exp(c_2 \epsilon^2\tau)$. For small errors $\epsilon$, by expanding the exponent, we have the following bound:
\begin{observation}\label{obs:survival-time-gainst-log-regret}
    The expected survival time of a Bayesian learner with $\epsilon$-inaccurate prior against any player with logarithmic regret (e.g., a no-regret convex optimizer) is $O\big(\frac{1}{\epsilon^3}\big)$.   
\end{observation}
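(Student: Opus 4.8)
The plan is to reduce the survival question to locating the \emph{larger} root of a one-dimensional transcendental equation, and then to bound that root by elementary monotonicity; the phrase ``expanding the exponent'' used in the excerpt corresponds to the bootstrap form of this bound.

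\emph{Setting up the comparison.} Player $1$ is the $\epsilon$-inaccurate Bayesian. By Proposition~\ref{thm:bayesian-convergence} it has converged to its closest prior model $q'$; taking $q'$ extremal — the distribution minimizing $I_q(q')$ subject to $TV(q',q)=\epsilon$, which by Pinsker's inequality is $I_q(q')=2\epsilon^2$ — Lemma~\ref{thm:expected-regret-lemma} gives its expected regret $\E[R^t]=2\epsilon^2 t + R$, with $R$ the constant from Theorem~\ref{thm:regret-of-q}. Player $2$ is the competitor, with expected regret $f(t)\le a\log t + b$, where $a,b$ depend on the number of states $S$ but not on $\epsilon$. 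By Equation~(\ref{eq:logshare-as-regret}), $\E[\log r^{12}_t]=f(t)-(2\epsilon^2 t + R)+\log r^{12}_0$, so — following the convention that the typical survival time $\tau$ is the last time this quantity is nonnegative — $\tau$ is the largest root of
\[
g(t) \;:=\; a\log t \;-\; 2\epsilon^2 t \;+\; c, \qquad c \;:=\; b + \log r^{12}_0 - R,
\]
with $c$ an $\epsilon$-independent constant; equivalently $\tau$ is the largest solution of $\tau = c_1\exp(c_2\epsilon^2\tau)$ where $c_1=e^{-c/a}$ and $c_2=2/a$.

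\emph{Bounding the root.} The function $g$ is strictly concave, maximized at $\tau^*=a/(2\epsilon^2)$, with $g(\tau^*)=a\log(a/(2\epsilon^2))-a+c = 2a\log(1/\epsilon)+O(1)>0$ for all small $\epsilon$. Hence $g$ has exactly two roots, and $\tau$ is the one exceeding $\tau^*$ (the smaller root is $O(1)$ and not the quantity of interest). Fix any constant $K>0$ and set $\tau_0:=K/\epsilon^3$. For $\epsilon$ small we have $\tau_0>\tau^*$, and
\[
g(\tau_0) \;=\; a\log(K/\epsilon^3) - 2K/\epsilon + c \;=\; 3a\log(1/\epsilon) - 2K/\epsilon + O(1) \;<\; 0,
\]
because the $1/\epsilon$ term dominates the $\log(1/\epsilon)$ term. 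Since $g$ is decreasing on $(\tau^*,\infty)$, it stays negative for all $t\ge\tau_0$, so $\tau\le\tau_0=O(1/\epsilon^3)$ (indeed $\tau=o(1/\epsilon^3)$, as $K$ was arbitrary). Equivalently, one bootstraps: $g<0$ already at $1/\epsilon^4$, so $\tau\le 1/\epsilon^4$; substituting this into the root identity $2\epsilon^2\tau = a\log\tau + c$ gives $\tau\le(a\log(1/\epsilon^4)+c)/(2\epsilon^2)=O(\epsilon^{-2}\log(1/\epsilon))=O(1/\epsilon^3)$.

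\emph{What the obstacle is.} There is no serious obstacle here: the work lives entirely in the two reductions already in hand — Lemma~\ref{thm:expected-regret-lemma} (regret of a constant strategy) and Equation~(\ref{eq:logshare-as-regret}) (wealth ratio equals regret gap) — plus the Pinsker identity $I_q(q')=2\epsilon^2$ for the extremal prior. The only points needing care are (i) picking the larger of the two roots of $g$, and (ii) recognizing that $O(1/\epsilon^3)$ is intentionally loose: balancing $a\log\tau\approx 2\epsilon^2\tau$ yields the sharp rate $\tau=\Theta(\epsilon^{-2}\log(1/\epsilon))$, and the cubic bound follows from $\log(1/\epsilon)=O(1/\epsilon)$. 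For a genuinely probabilistic (rather than expectation-level) survival-time statement, one would replace the threshold ``$\E[\log r^{12}_t]\ge 0$'' by a martingale concentration argument of the kind used for Theorem~\ref{thm:survival-by-finite-regret-gap}; this affects only lower-order terms and not the $O(1/\epsilon^3)$ conclusion.
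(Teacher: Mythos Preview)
Your proposal is correct and follows essentially the same approach as the paper: you set up the survival-time equation $f(\tau)=2\epsilon^2\tau+R$ with logarithmic $f$, obtain the implicit relation $\tau=c_1\exp(c_2\epsilon^2\tau)$, and extract the $O(1/\epsilon^3)$ bound by iterating (what the paper calls ``expanding the exponent''). Your treatment is more careful than the paper's one-line sketch---you explicitly identify the larger root of the concave function $g$, verify $g(K/\epsilon^3)<0$, and note that the sharp rate is $\Theta(\epsilon^{-2}\log(1/\epsilon))$---but the underlying route is the same.
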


Finally, in competition against a learner with regret $f(T) = \sqrt{T}$ the survival time is longer:
\begin{observation}\label{obs:survival-time-gainst-sqrt-regret}
    The expected survival time of a Bayesian learner with $\epsilon$-inaccurate prior against any player with $\sqrt{T}$ regret is $O\big(\frac{1}{\epsilon^4}\big)$. 
\end{observation}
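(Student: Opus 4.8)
The statement to prove is Observation~\ref{obs:survival-time-gainst-sqrt-regret}: an $\epsilon$-inaccurate Bayesian (one that has converged to a distribution $q'$ with $TV(q',q)=\epsilon$) has expected survival time $O(1/\epsilon^4)$ when competing against a player whose (expected) regret grows like $f(t)=\sqrt{t}$. By the framework already built in the excerpt, the survival time $\tau$ is defined as the crossover point at which the inaccurate Bayesian's expected regret overtakes the competitor's, captured by the balance equation
\[
f(\tau) = 2\epsilon^2 \tau + R,
\]
where $R$ is the constant expected regret of the $q$-investor from Theorem~\ref{thm:regret-of-q}. So the entire task reduces to solving this scalar equation with $f(\tau)=\sqrt{\tau}$ and reading off the asymptotics in $\epsilon$.

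**Key steps.** First I would substitute $f(\tau)=\sqrt{\tau}$ to obtain $\sqrt{\tau} = 2\epsilon^2 \tau + R$. Second, I would treat this as a quadratic in the variable $x=\sqrt{\tau}$: writing $2\epsilon^2 x^2 - x + R = 0$ and solving gives $x = \frac{1 \pm \sqrt{1 - 8\epsilon^2 R}}{4\epsilon^2}$. For the survival-time bound we want the point where the competitor's regret has caught up, so we take the larger root that is physically meaningful — but actually the relevant crossover is the \emph{smaller} root in $x$ before which the Bayesian is still ahead; I would be careful here and argue from monotonicity: $2\epsilon^2\tau + R$ starts above $\sqrt{\tau}$ (at $\tau$ small, since $R>0$), and since the linear-in-$\tau$ term eventually dominates $\sqrt\tau$ again, the relevant window where the Bayesian leads corresponds to the interval between the two roots; the end of the Bayesian's favorable phase is the \emph{larger} root. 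Third, for small $\epsilon$ I would expand: the larger root behaves as $x = \sqrt{\tau} \approx \frac{1}{4\epsilon^2}\big(1 + \sqrt{1-8\epsilon^2 R}\big) = \Theta(1/\epsilon^2)$, hence $\tau = \Theta(1/\epsilon^4)$. Fourth, I would note that for $\epsilon$ small enough that $8\epsilon^2 R < 1$ the square root is real and the bound $\tau = O(1/\epsilon^4)$ holds cleanly; a one-line Taylor expansion $\sqrt{1-8\epsilon^2R} = 1 - 4\epsilon^2 R + O(\epsilon^4)$ makes the constant explicit if desired.

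**The main obstacle.** The calculation itself is elementary — it is a quadratic. The only genuinely delicate point is \emph{which} root to take and justifying that it is the correct notion of ``survival time,'' i.e., confirming that the inaccurate Bayesian's expected wealth share exceeds one half precisely on the interval between the roots, and decays monotonically afterwards. This requires invoking Equation~(\ref{eq:logshare-as-regret}) to translate the regret gap $f(\tau) - (2\epsilon^2\tau + R)$ into $\log$ of the wealth ratio (up to the additive constant $\log r_0^{nm}$), and observing that since $2\epsilon^2\tau + R - \sqrt\tau$ is a strictly convex function of $\sqrt\tau$ with a single sign change from negative to positive on $(0,\infty)$ after its root, the wealth ratio crosses $1$ exactly once in the direction of the Bayesian losing. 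I would state this monotonicity explicitly and then the asymptotic extraction $\tau = O(1/\epsilon^4)$ follows immediately. The contrast with Observations~\ref{obs:survival-time-gainst-constant-regret} and~\ref{obs:survival-time-gainst-log-regret} is then transparent: a slower-growing competitor regret $f$ pushes the crossover later, and the $\sqrt{\cdot}$ growth is what upgrades the exponent from $2$ (constant $f$) to $4$.
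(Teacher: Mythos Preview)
Your proposal is correct and follows essentially the same approach as the paper: substitute $f(\tau)=\sqrt{\tau}$ into the balance equation $f(\tau)=2\epsilon^2\tau+R$, change variables to $x=\sqrt{\tau}$, and read off the larger root of the resulting quadratic as $x=\Theta(1/\epsilon^2)$, hence $\tau=\Theta(1/\epsilon^4)$. The paper states this observation without spelling out the quadratic, so your added detail about root selection is a welcome elaboration; the only blemish is the phrasing ``a single sign change from negative to positive on $(0,\infty)$ after its root,'' since $2\epsilon^2 x^2 - x + R$ actually changes sign twice (positive at $x=0$, negative between the roots, positive again), which you in fact described correctly two sentences earlier.
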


The interpretation of this analysis naturally depends on the time scale of interest. When $T$ is large, which is our main focus (e.g., when trade occurs at high frequency or if investments are long-term), only long-term survival matters. In this case, a Bayesian with an inaccurate prior will eventually vanish in competition with any learner who converges to the truth (e.g., a regret minimizer). However, when transient dynamics are also relevant, it is possible that a Bayesian with a prior that is inaccurate but relatively close to the true distribution may retain a significant market share for some time before ultimately vanishing. See Section~\ref{sec:simulations} for an example.

\subsection{Bayesian Learning with Noisy Updates}\label{sec:noisy-Bayesian}
Next, we consider a different type of imperfect Bayesian learner that does have $q$ in its prior, but in every step performs slightly inaccurate ``trembling hand'' updates. Here as well, we demonstrate that Bayesian learning is fragile, even when the correct distribution lies in the support. To model this, we define a noisy Bayesian learner as one who at each step, either slightly overweighs the current observation or slightly overweighs its current prior, such that, in expectation, both the data and the prior receive the correct weights in every step (i.e., the errors in weight have zero mean).

For concreteness, consider the following scenario of a learner attempting to learn a distribution of states. Suppose that there are two states, $s_t = 0$ with a fixed probability $q \in (0,1)$, and $s_t = 1$ otherwise. The learner considers two models: $\theta_a = q$, which is the correct model, and $\theta_b \neq q$, with $\theta_b < 1$. The log-likelihood is given by
\begin{align}
    L(s_t) = 
    (1 - s_t) \log\Big(\frac{\theta_a}{\theta_b}\Big) + 
    s_t \log\Big(\frac{1 - \theta_a}{1 - \theta_b}\Big).
\end{align}

Now suppose that in every step $t$ the Bayesian learner performs ``$\lmb$-noisy updates'' where it over- or under-weights the data compared to the prior with a small excess weight $\lmbt$, where $\lmbt$ has $0$ mean. Specifically, for a parameter $\lmb > 0$, $\lmbt= \lmb$ with probability $1/2$ and $\eta_t = -\lmb$ otherwise. We find that even a tiny zero-mean error has a significant impact, essentially breaking the learning process.

\begin{theorem}\label{thm:wrong-update-Bayesians-vanish}
    For any $\lmb > 0$, the Bayesian learner with $\lmb$-noisy updates does not converge, and therefore incurs regret linear in $T$. 
\end{theorem}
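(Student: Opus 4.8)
The plan is to track the log-odds process $Z_t := \log(\lambda^a_t/\lambda^b_t)$ and show that the $\eta$-noisy updates turn it into a random walk with a nonzero-variance i.i.d. increment superimposed on the usual Bayesian drift, so that $Z_t$ does not converge to $+\infty$ but instead oscillates, returning to any bounded neighborhood of $0$ infinitely often. From this non-convergence of the posterior, I then deduce that the investment rule $\alpha^n_t = q^n_t = \lambda^a_t\theta_a + \lambda^b_t\theta_b$ fails to converge to $q$, so the per-step relative-entropy term $I_q(\alpha^n_t)$ is bounded below by a positive constant infinitely often; summing, $\sum_{t=1}^T I_q(\alpha^n_t)$ grows linearly, and by Lemma~\ref{thm:expected-regret-lemma}'s logic (and Equation~(\ref{eq:logshare-as-regret})) this forces expected regret linear in $T$.

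Concretely, first I would write out the noisy update: a standard Bayesian step adds $L(s_t)$ to $Z_t$, but the $\eta$-noisy learner adds $(1+\eta_t)L(s_t)$ (over-/under-weighting the new data-point relative to the prior by the excess factor $\eta_t\in\{+\eta,-\eta\}$, each with probability $1/2$). Hence
\begin{equation}
Z_{t} = Z_0 + \sum_{\tau=1}^{t} L(s_\tau) + \sum_{\tau=1}^{t} \eta_\tau L(s_\tau).
\end{equation}
The first sum is the classical term with positive drift $\E[L(s_\tau)] = I_q(\theta_b) > 0$ (this is exactly the computation behind Equation~(\ref{eq: bayes limit})), which alone would send $Z_t\to\infty$. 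The second sum is a martingale: conditioned on $s_\tau$, $\E[\eta_\tau L(s_\tau)] = 0$, and its increments have variance $\eta^2\,\E[L(s_\tau)^2]$, which is a fixed positive constant $\sigma^2$ since $L$ takes only two values $\log(\theta_a/\theta_b)$ and $\log((1-\theta_a)/(1-\theta_b))$, not both zero (as $\theta_b\neq q=\theta_a$). So the noise term is a mean-zero random walk with linearly growing variance, i.e., fluctuations of order $\sqrt{t}$ — which is asymptotically dominated by the linear drift. That means $Z_t\to+\infty$ still holds almost surely, so I cannot get non-convergence of the posterior this way.

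This is the main obstacle, and it tells me the intended model must be read differently: the excess weight $\eta_t$ must attach not to a single observation but to the \emph{entire accumulated prior} versus the \emph{entire batch of new data seen that step}, or equivalently the update is $Z_t = (1+\eta_t)(Z_{t-1} + L(s_t))$ or $Z_t = (1+\eta_t) Z_{t-1} + L(s_t)$ — a multiplicative perturbation. Under the multiplicative form $Z_t = (1\pm\eta) Z_{t-1} + L(s_t)$, the process is a random coefficient autoregression: taking logs of $|Z_t|$ when $Z_t$ is large, $\log|Z_t| \approx \log|Z_{t-1}| + \log(1\pm\eta)$, and since $\tfrac12\log(1+\eta)+\tfrac12\log(1-\eta) = \tfrac12\log(1-\eta^2) < 0$, the multiplicative factor has negative log-expectation for every $\eta>0$. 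Hence $|Z_t|$ is pulled back multiplicatively and cannot escape to infinity; it is positive recurrent (or at least returns to a bounded set infinitely often), so $\lambda^a_t$ does not converge to $1$. I would make this precise with a Lyapunov/drift argument on $\log(1+Z_t^2)$ or by a direct comparison: outside a bounded interval the expected one-step change of $\log|Z_t|$ is at most $\tfrac12\log(1-\eta^2) + o(1) < 0$, giving recurrence by a standard Foster–Lyapunov criterion.

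Given recurrence of $Z_t$ to a bounded set, there is $\delta'>0$ and a positive-density set of times at which $|Z_t| \le C$, hence $\lambda^a_t \le 1-\delta''$, hence $\|\alpha^n_t - q\|$ is bounded away from $0$, hence $I_q(\alpha^n_t) \ge c > 0$ at those times (relative entropy is zero only at $\alpha^n_t=q$ and is continuous). Therefore $\E\big[\sum_{t=1}^T I_q(\alpha^n_t)\big] = \Omega(T)$, and since the expected regret of any constant strategy with entropy gap $g$ is $gT + R$ (Lemma~\ref{thm:expected-regret-lemma}), the same bookkeeping applied term-by-term — using that $U^t$ is the per-step wealth-growth comparison to $q$ and $\E[-U^t(\alpha^n_t)] = I_q(\alpha^n_t)$ — yields $\E[R^T(\alpha^n_{1:T})] \ge \sum_t I_q(\alpha^n_t) \ge \Omega(T)$, i.e., regret linear in $T$. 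The delicate points to get right are (i) pinning down the correct formalization of ``$\eta$-noisy update'' so that the perturbation is genuinely multiplicative on the log-odds, and (ii) upgrading ``returns to a bounded set infinitely often'' to ``a positive fraction of steps have $I_q(\alpha^n_t)$ bounded below,'' which needs the recurrence to be quantitative (positive expected density of such times), obtained from the negative multiplicative drift via a renewal/ergodic argument on the Markov chain $Z_t$.
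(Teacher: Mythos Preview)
Your approach is essentially the paper's, and in some respects more complete. The paper's noisy update is $Z_t = (1-\eta_t)\,Z_{t-1} + (1+\eta_t)\,L(s_t)$ (excess weight $+\eta_t$ on the new datum, $-\eta_t$ on the accumulated prior), so your second guess $Z_t = (1\pm\eta)\,Z_{t-1} + L(s_t)$ is the right form up to the sign on the prior factor (immaterial by symmetry of $\eta_t$) and the extra $(1+\eta_t)$ on $L(s_t)$ (a bounded perturbation that does not affect the recurrence argument). Your key observation, that $\tfrac12\log(1+\eta)+\tfrac12\log(1-\eta)=\tfrac12\log(1-\eta^2)<0$ gives a negative multiplicative drift on $|Z_t|$ and hence recurrence, is exactly the mechanism the paper exploits: it proves in detail that $\prod_{\tau}(1-\eta_\tau)\to 0$ a.s.\ via this same log-drift computation (phrased as a Hoeffding bound on the count $n_+(t)$ of $+\eta$ steps), then reduces the remaining term to the recursion $Z_{t+1}=(1-\eta_t)Z_t + L(s_{t+1})$ and simply asserts ``clearly, this sequence does not converge.'' Your Foster--Lyapunov/renewal sketch for positive recurrence, and the subsequent step that a positive density of times with $I_q(\alpha^n_t)\ge c>0$ yields linear expected regret, actually supply more rigor on the non-convergence-to-linear-regret implication than the paper itself provides.
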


The idea of the proof is to show that, with high probability, the learner has a systematic drift toward over-weighing new observations, despite the symmetry of the errors around zero. As a result, the learner’s beliefs fail to converge and continue to fluctuate in response to recent random events, leading to linear regret. The full proof appears in Appendix~\ref{sec:appendix-proofs}.

\section{Robust Bayesian Updates}\label{sec:robust-Bayes-update}
In the analysis above, we saw that the fast convergence of Bayesian learners is both a strength and a weakness. On the one hand, when a Bayesian converges to a correct model, it achieves constant regret and survives even against a fully informed competitor. On the other hand, if it converges to an incorrect model, it suffers linear regret (relative to the best fixed strategy in hindsight) and vanishes from the market. 
This strong convergence also becomes problematic when the data-generating process undergoes distribution shifts. 
In this section, we develop two methods to utilize Bayesian updates and achieve constant regret in a more robust manner. The first method addresses distribution shifts within a given model class. The second addresses the problem of misspecified priors by combining Bayesian learning with a safe switching strategy to a no-regret fallback.

\subsection{Distribution Shifts}\label{sec:dist-shift}
We begin with a simple modification of Bayesian updating for environments in which the data-generating distribution may shift over time. The method we propose preserves the fast convergence of Bayesian learning at early stages, while regularizing the update enough to improve robustness to shifts. 
We first show that, although a standard Bayesian learner may eventually adapt after a distribution shift, its regret relative to the true sequence of distributions can still be linear in the total time $T$.

\begin{proposition}\label{thm:dist-shift-linear-regret-proposition}
    Consider a market in which the data-generating distribution $q$ may shift over the course of $T$ steps among a finite set $Q$ of distributions. 
    A Bayesian learner with a finite prior that assigns positive probability to each model in $Q$ may incur regret linear in $T$ compared to a benchmark that knows the correct sequence of distributions in advance, even if there is only a single distribution shift during the $T$ steps. 
\end{proposition}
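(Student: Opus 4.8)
The plan is to exhibit an explicit instance with two states and two candidate models, run the data-generating process under one distribution for a long initial phase and then switch to the other for the remainder, and show that the Bayesian's posterior is so concentrated after the first phase that it cannot recover quickly enough in the second phase, leading to linear per-step regret over a constant fraction of the horizon. Concretely, I would take $Q = \{q^{(1)}, q^{(2)}\}$ with $q^{(1)} \neq q^{(2)}$, both bounded away from $0$ and $1$, and let the Bayesian's prior have support exactly $\{q^{(1)}, q^{(2)}\}$ with any fixed positive weights. Run the true distribution as $q^{(1)}$ for the first $T/2$ steps and as $q^{(2)}$ for the last $T/2$ steps; the benchmark knows this schedule and plays $q^{(1)}$ then $q^{(2)}$, incurring only the constant expected regret $R$ from Theorem~\ref{thm:regret-of-q} on each phase (so $O(1)$ total, or at worst $O(1)$ per phase).

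The key steps, in order: (1) By Equation~\eqref{eq:log-odds-by-n} and the law of large numbers (exactly the computation leading to Equation~\eqref{eq: bayes limit}), after the first phase of length $T/2$ the log-odds $\log(\lambda^{2}_{T/2}/\lambda^{1}_{T/2})$ is, with high probability, at most $-c T$ for a constant $c = c(q^{(1)}, q^{(2)}) > 0$ depending on the relative entropy $I_{q^{(1)}}(q^{(2)})$; that is, the posterior puts weight at most $e^{-cT}$ on the model $q^{(2)}$ that becomes correct in phase two. (2) In phase two, each new observation from $q^{(2)}$ shifts the log-odds in favor of model $q^{(2)}$ by only a bounded amount per step (at most $\max_s |\log(\theta^2_s/\theta^1_s)|$), so it takes $\Omega(T)$ steps before the posterior even becomes balanced, and hence for a constant fraction of phase two — say the first $T/4$ steps of it — the Bayesian's predicted distribution $q^n_t = \sum_k \lambda^k_t \theta^k$ is still within a bounded distance of the wrong model $q^{(1)}$, so $I_{q^{(2)}}(q^n_t) \geq \delta$ for a constant $\delta > 0$. (3) Apply the wealth/regret identity: by Equation~\eqref{eq:logshare-as-regret} together with the per-step relative-entropy accounting in Section~\ref{sec:Bayesians} (the expression $\sum_t \E[I_q(\alpha^n_t)]$ for the expected log-wealth gap against an agent playing the current true $q$), the Bayesian accumulates expected regret at least $\sum_{t} I_{q^{(2)}}(q^n_t) \geq (T/4)\cdot\delta = \Omega(T)$ relative to the benchmark over phase two. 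Combining with Markov/concentration to move from expectation to a high-probability (or almost-sure-in-the-relevant-sense) statement gives regret linear in $T$.

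The main obstacle I anticipate is step (2): making precise that the posterior genuinely stays pinned near the wrong model for a constant fraction of phase two, uniformly over the randomness, rather than merely in expectation. The clean way to handle this is to note that the log-odds process in phase two is a random walk with bounded increments and strictly positive drift toward $q^{(2)}$ of size $I_{q^{(2)}}(q^{(1)}) > 0$ (by the same LLN/relative-entropy identity), started from an initial value $\le -cT$; a standard concentration bound (Azuma–Hoeffding on the centered walk) shows that after $\beta T$ steps the log-odds is still at most $-(c - \beta\cdot\text{const})T + o(T) < 0$ with high probability, provided $\beta$ is chosen as a small enough constant. This keeps $\lambda^{2}_t$ exponentially small, hence $q^n_t$ bounded away from $q^{(2)}$, for all $t$ in that window, which is exactly what step (3) needs. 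Everything else is bookkeeping with the relative-entropy expressions already established in Sections~\ref{sec:model}–\ref{sec:no-regret}.
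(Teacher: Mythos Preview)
Your proposal is correct and follows essentially the same approach as the paper: both construct a two-state, two-model instance, run one distribution for a constant fraction of the horizon so the posterior log-odds becomes $\Theta(T)$ in favor of the first model, then switch and observe that the log-odds can only recover by a bounded amount per step, leaving the Bayesian pinned near the wrong model for a further constant fraction of $T$ and hence accumulating linear relative-entropy regret. The paper instantiates this with the symmetric pair $(p,1-p),(1-p,p)$, $p=3/4$, and a $2T/3$--$T/3$ split (so the log-odds never recovers at all during phase two) and argues in expectation; your $T/2$--$T/2$ split with an Azuma--Hoeffding bound on the log-odds random walk is a slightly more careful high-probability version of the same computation.
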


The proof is in Appendix~\ref{sec:appendix-proofs}. 
The Bayesian's fast convergence is a useful feature if the Bayesian puts positive prior probability on a fixed, correct model of the data-generating process and for asymptotic results this is all that matters; the prior is otherwise irrelevant. But if the data generating process shifts over time the Bayesian’s posterior at the (unknown) time of a shift becomes the prior. So intermediate-term properties matter. In the intermediate term, a Bayesian who has nearly converged to a past model learns slowly, as the log odds of the new model to the past one are extreme and the expected shift in these log odds is the fixed relative entropy between the two models. With repeated shifts, the Bayesian thus spends a large amount of time near models that are incorrect and accumulates regret that can be linear. 

By contrast, there are no-regret learning algorithms that are known to handle distribution shifts more gracefully \citep{hazan2009efficient,kozat2007universal}. In particular, for the portfolio selection problem we consider, such algorithms can achieve regret that is logarithmic in $T$ and linear in the number $n$ of distribution shifts (see also \cite{hazan2016introduction}, Sections 10.3 and 10.4). However, as we have seen, logarithmic regret is not sufficient for survival when competing against agents who invest according to the true distribution, or even against Bayesian learners. 

The main observation we make here is that, while the exponential convergence rate that Bayesians achieve is sufficient to obtain constant regret---and thus ensures long-run survival against fully informed competitors---it is not necessary. We propose a simple and natural method that combines advantages of both approaches, taking a step towards a best-of-both-worlds learning strategy. It guarantees constant regret in stationary settings, like standard Bayesian learning, thus outperforming no-regret learners in such cases. At the same time, it guarantees logarithmic regret under distribution shifts, surviving against no-regret learners in settings where adaptability is essential. 

Technically, our method stays close to Bayesian learning. 
We use a regularized version of the Bayesian update rule that initially behaves like standard Bayesian learning, but slows convergence just enough to keep alternative hypotheses alive. This enables fast recovery after shifts. 

\vspace{8pt}
\noindent
\textbf{Robust Bayesian Update:}
The \emph{Robust Bayesian Update} process we propose uses a simple and efficient modification of standard Bayesian learning. Initially, the learner sets uniform prior weights $\lambda^1_0, \dots, \lambda^K_0$. At each time $t \geq 0$, the learner uses the current weights $\lambda_t$ as its investment strategy. Then, after observing the outcome, it updates beliefs with an added regularization term $\epsilon_t = t^{-2}$:
\begin{equation}\label{eq:robust-Bayes-update}
\lambda_{t} \rightarrow \text{ (Bayes update) } \rightarrow \tilde{\lambda}_{t+1} \rightarrow \lambda_{t+1} = \frac{\tilde{\lambda}_{t+1} + \epsilon_t}{1 + K \epsilon_t}.    
\end{equation}
where $K$ is the number of hypotheses.  
That is, using the observed outcome at time~$t$, the learner applies the standard Bayesian update to obtain tentative weights~$\tilde{\lambda}_{t+1}$. It then adds regularization and normalizes to get weights $\lambda_{t+1}$ in the simplex.   
This simple process satisfies the following guarantee:

\begin{theorem}\label{thm:robust-Bayesian}
 Let $Q$ be the set of models in the learner's support. Suppose an adversary selects a sequence of intervals $\big(\{T_i, q_i\}\big)_{i=1}^n$, where $q_i \in Q$ and $T_i$ is the duration of the $i$-th interval during which $q_i$ is the data-generating distribution. Let $T = \sum_{i=1}^n T_i$ denote the total time. Then the regret of the Robust Bayesian Update, relative to a benchmark of fixed investment strategies in each interval, knowing the intervals in advance, is $O(n \log T)$ with high probability. In particular, when $n$ is constant, the regret is logarithmic in $T$. 

 Moreover, if there are no distribution shifts (i.e., when $n=1$), the regret remains bounded by a constant with high probability.
\end{theorem}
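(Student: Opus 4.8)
The plan is to track the evolution of the regularized weights $\lambda^k_t$ and show that, within each stationary interval $i$, the weight on the correct model $q_i$ recovers quickly from whatever (possibly tiny) value it inherited from the previous interval, and that the total regret decomposes into a sum of per-interval regrets plus a bounded ``transition cost.'' First I would establish a per-step lemma describing how the regularization interacts with the Bayes update: since $\lambda_{t+1} = (\tilde\lambda_{t+1} + \epsilon_t)/(1 + K\epsilon_t)$ with $\epsilon_t = t^{-2}$, every weight is always at least $\epsilon_t/(1+K\epsilon_t) = \Theta(t^{-2})$, so no hypothesis is ever killed, and $\log(1/\lambda^k_t) = O(\log t)$ uniformly. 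Simultaneously, when $q_i$ is the true distribution, the log-odds $\log(\lambda^{q_i}_t/\lambda^l_t)$ still drifts upward at expected rate $I_{q_i}(\theta^l) - I_{q_i}(q_i) = I_{q_i}(\theta^l) > 0$ per step (exactly as in Equation~(\ref{eq: bayes limit})), because the additive regularization is lower-order and only perturbs the update by $O(t^{-2})$, which is summable. So within interval $i$, after $O(\log T)$ steps the weight on $q_i$ is again $\Omega(1)$, with high probability (by a law-of-large-numbers / Azuma concentration argument on the martingale part of the log-likelihood, exactly the estimate used in Theorem~\ref{thm:regret-of-a-Bayesian}).

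Next I would convert this weight behavior into a regret bound. Using the relation from Section~\ref{sec:compare} that regret equals the gap in $\sum_t U^t$ against the benchmark, and the fact that the benchmark here plays the fixed optimal strategy $q_i$ on interval $i$, the per-step regret contribution on interval $i$ is $U^t(q_i) - U^t(\lambda_t) = \sum_s \ind{s_t=s}\log(q_{i,s}/\lambda_{t,s})$. I would split each interval into a short ``recovery phase'' of length $O(\log T)$ and a long ``converged phase.'' On the converged phase, $\lambda_t$ is exponentially close to $q_i$ (up to the $\Theta(t^{-2})$ floor), so the per-step regret is summable and contributes only $O(1)$; this is the same martingale-plus-bounded-relative-entropy argument as in Proposition~\ref{thm:Bayesians-suvive-against-q} and Theorem~\ref{thm:regret-of-a-Bayesian}, and crucially the $t^{-2}$ regularization contributes $\sum_t O(t^{-2}) = O(1)$ to the relative entropy sum. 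On the recovery phase, each step costs at most $\log(1/\min_{s,t}\lambda_{t,s}) = O(\log T)$, and there are $O(\log T)$ such steps per interval — but in fact a sharper accounting (tracking how fast $\lambda^{q_i}_t$ climbs back, geometrically) shows the recovery phase costs only $O(\log T)$ total per interval, not $O(\log^2 T)$. Summing over the $n$ intervals gives $O(n\log T)$ with high probability. When $n = 1$ there is no recovery phase (the uniform initial prior already puts weight $1/K$ on $q_1$), so only the converged-phase analysis applies and the regret is $O(1)$ at all times.

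The main obstacle I anticipate is the recovery-phase bookkeeping: showing that after a shift the weight on the new true model climbs back to a constant in only $O(\log T)$ steps \emph{and} that the accumulated regret during this climb is $O(\log T)$ rather than $O(\log^2 T)$. The subtlety is that at the moment of a shift, $\lambda^{q_i}_t$ could be as small as $\Theta(t^{-2}) \approx T^{-2}$, so its log-odds deficit against the previously-dominant model is $\Theta(\log T)$; one must verify that the deterministic upward drift $I_{q_i}(\cdot) > 0$ per step erases this deficit in $O(\log T)$ steps while the stochastic fluctuations stay controlled (Azuma with the right variance proxy), and then argue that the regret during recovery telescopes against the \emph{increase} in $\log\lambda^{q_i}_t$ — i.e., a potential-function argument with potential $\Phi_t = -\log\lambda^{q_i}_t$, whose total decrease over the interval is $O(\log T)$ and which dominates the recovery-phase regret up to constants. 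The boundary effects at the junction between intervals (where $\epsilon_t$ has changed and the ``wrong'' model is still dominant) and making the ``with high probability'' uniform over all $n$ intervals via a union bound (costing only an extra $\log$ factor inside the probability, not the regret) are the remaining details to handle carefully.
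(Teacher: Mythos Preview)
Your overall structure matches the paper's proof: split each interval into a short recovery phase and a converged phase, use the $\Theta(t^{-2})$ floor to lower-bound the weight on the new correct model at the moment of a shift, and show the log-odds drift erases the $O(\log T)$ deficit in $O(\log T)$ steps. However, the claim that the regularization ``only perturbs the [log-odds] update by $O(t^{-2})$'' is wrong, and this is precisely the technical point the paper has to work around. The perturbation to $\log(\lambda^k/\lambda^q)$ is $\log(1+\epsilon_t/\tilde\lambda^k_{t+1}) - \log(1+\epsilon_t/\tilde\lambda^q_{t+1})$, so when the wrong model's weight $\tilde\lambda^k$ is itself of order $\epsilon_t = t^{-2}$ (as it is throughout the converged phase), the perturbation is $\Theta(1)$, not $O(t^{-2})$. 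The paper resolves this with a case split at the threshold $\tilde\lambda^k_{t+1}=t^{-3/2}$: above the threshold the perturbation is at most $\epsilon_t/\tilde\lambda^k_{t+1}\le t^{-1/2}$ and the log-odds still decrease by essentially $I_q(\theta^k)$ per step; below the threshold one abandons the log-odds and simply records $\lambda^k_{t+1}<t^{-4/3}$, which is already summable. Your converged-phase bound based on the floor is fine, and during recovery the perturbation has the \emph{favorable} sign (it boosts the currently small $\lambda^{q_i}$), so your $O(\log T)$ recovery time survives---but the justification should be this sign observation, not the incorrect $O(t^{-2})$ estimate.

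Separately, your concern about $O(\log^2 T)$ recovery regret and the potential-function remedy are unnecessary, stemming from conflating $\lambda_t$ (the posterior over models) with the investment rule $\alpha_t=\sum_k\lambda^k_t\theta^k$. Since every model satisfies $\theta^k_s\ge\Delta$, one has $\alpha_{t,s}\ge\Delta$ unconditionally, so the per-step regret against $q_i$ is at most $\log(1/\Delta)$. With $O(\log T)$ recovery steps this gives $O(\log T)$ recovery-phase regret directly, which is what the paper uses.
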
 
\vspace{3pt}

The intuition behind the proof is that when the weight of an incorrect model is large, it decays exponentially fast, as in the case of a standard Bayesian learner. When it is small---approaching the scale of the regularization term---it no longer contributes significantly to regret. Thus, the regret under a stationary process remains bounded. After a distribution shift, all model weights are of at least the size of the regularization level, which prevents suppressing the alternatives at an exponential rate and keeps them ``alive.'' This is then followed by exponential convergence, leading to regret that grows only logarithmically in $T$. The proof of the theorem is in Appendix~\ref{sec:appendix-proofs}.

\subsection{Bayes to No-Regret Switching Strategy}
The simple regularization method described above resolves the adaptability issue of Bayesian updating from Proposition~\ref{thm:dist-shift-linear-regret-proposition} and yields a regret guarantee for distribution shifts within the class $Q$. It does not, however, provide a general adversarial-regret guarantee, and in particular, it does not address misspecification. If the data-generating distribution lies outside the support, then the learner may incur linear regret. 

We now consider a hybrid strategy that provides robustness both when the prior is well specified and when it is misspecified. The algorithm starts with the Bayesian strategy and switches permanently to a no-regret strategy if the realized performance of the Bayesian learner falls sufficiently far behind. 
The main challenge is to ensure that false switches resulting from random fluctuations in the market process remain sufficiently rare so as to preserve the constant regret of Bayesian learning when the prior is well specified, while still guaranteeing sublinear regret in the worst case.

The one-period utility is
$ 
U_t(\alpha) = \log \alpha_{s_t} 
$, 
where $s_t$ is the  realized state. After each round, an investor can evaluate both the Bayesian action and no-regret action on that outcome.  
The switching rule uses the difference between the two.

\vspace{5pt}
\noindent
\textbf{The switching strategy:}
Recall that portfolios are in the set
$
A = \{\alpha\in\Delta_S:\alpha_s\geq \Delta \text{ for all } s\}
$. 
Let $\alpha_t^B$ be the Bayesian investment based on a prior over models $\Theta=\{\theta^1,\dots,\theta^K\}$, and let $\alpha_t^N$ be the action at time $t$ of a 
no-regret algorithm run in parallel on the observed outcomes.  
Define
\begin{equation}
d_t = U_t(\alpha_t^N) - U_t(\alpha_t^B), \qquad D_t = \sum_{t'=1}^t d_{t'}.
\end{equation}
Since both regrets are measured against the same action set, the hindsight benchmark cancels, and
\begin{equation}
D_t = R_t(\alpha^B_{1:t}) - R_t(\alpha^N_{1:t}),
\end{equation}
where
\begin{equation}
R_t(\alpha_{1:t})
= 
\max_{\alpha\in  A}\sum_{t'=1}^t U_{t'}(\alpha)
- 
\sum_{t'=1}^t U_{t'}(\alpha_{t'}).
\end{equation}
Also, since $U_t(\alpha)\in[\log \Delta,0]$, we have $|d_t|\leq L$, with $L=\log(1/\Delta)$.

We set a dynamic threshold
\begin{equation}
b_t
=  
L\sqrt{2 t \log \left( \frac{\pi^2 t^2}{6 \delta \lambda_{\min}} \right)},
\end{equation}
where 
$\lambda_0^1,\dots,\lambda_0^K$ are the Bayesian prior weights,  $\lambda_{\min} = \min_{k \in [K]} \lambda_0^k$, and $\delta > 0$ is a parameter to be determined later.   
This threshold grows at rate $\tilde O(\sqrt t)$. If $D_t \geq b_t$, the algorithm switches permanently to the no-regret strategy.

\vspace{3pt}
\noindent
\textbf{Algorithm (Bayes to No-Regret Switch).}
Fix a finite Bayesian prior and a no-regret algorithm.
Initialize in \emph{Bayes} mode. At each step $t$:
\begin{enumerate}
    \item Compute $\alpha_t^B$ and $\alpha_t^N$ from the current history.
    \item Play $\alpha^{\mathrm{sw}}_t = \alpha_t^B$ if still in \emph{Bayes} mode, and otherwise play $\alpha^{\mathrm{sw}}_t = \alpha_t^N$.
    \item Observe state $s_t$ and update $D_t$. If $D_t \geq b_t$, switch permanently to \emph{No-regret}.
\end{enumerate}

\begin{theorem}[Bayes to No-Regret Switch]\label{thm:hybrid-Bayes-No-regret}
Fix a horizon $T$ and set $\delta=1/T$.
\begin{enumerate}
    \item If the true distribution $q$ is in the support of the Bayesian prior, then with probability at least $1-1/T$ the algorithm does not switch. 
    Moreover,
    $$
    \E[R_T(\alpha^{\mathrm{sw}}_{1:T})]
    \leq
    \E[R_T(\alpha^B_{1:T})] + L.
    $$
    In particular, by Theorem \ref{thm:regret-of-a-Bayesian}, its expected regret is bounded by a constant.

    \item For every realization,
    $$
    R_T(\alpha^{\mathrm{sw}}_{1:T})
    \leq
    R_T(\alpha^N_{1:T}) + b_T + L.
    $$
    Therefore, if $\alpha^N$ is a no-regret strategy with $O(\sqrt{T})$ regret or better, then
    $$
    R_T(\alpha^{\mathrm{sw}}_{1:T}) = \tilde O(\sqrt T).
    $$
\end{enumerate}
\end{theorem}

The idea of the proof is that, first, the switching rule is based on the cumulative realized utility gap $D_t$, which is exactly the regret difference between the Bayesian and no-regret strategies. Thus, when $D_t$ crosses the threshold, the switching algorithm's regret can lag behind the no-regret strategy by at most the threshold at the switching time. This gives the worst-case regret guarantee. Second, when the Bayesian prior is well specified, false switches are rare, because a well-specified Bayesian cannot accumulate regret quickly enough to cross the threshold. The proof uses the prior predictive distribution as a device to bound the probability of such a switch, using the observation that under this mixture probability measure,  $D_t$ is a supermartingale. 
Thus, under correct specification, with high probability the algorithm never switches and behaves like the Bayesian learner, preserving its constant-regret guarantee. In the  case that the Bayesian model is misspecified, the algorithm switches once the Bayesian strategy falls sufficiently far behind, and then inherits the regret guarantee of the no-regret algorithm, up to the threshold term.

\section{simulations}\label{sec:simulations}
In this section, we present simulation examples illustrating some of the phenomena discussed above. 

Figure \ref{fig:wealth-dynamics} shows the step-by-step wealth shares over a duration of $10^6$ steps, as well as their moving averages over a window of $10$,$000$ steps for better visualization. After a brief initial phase, the no-regret learner (agent 2) converges to playing the correct distribution almost all the time, and it can be seen that its wealth approaches $1$.  

Figure \ref{fig:wealth-distribution-suffix} depicts the distribution of wealth shares for the agents at each time step during the second half of the simulation. It can be seen that the regret-minimizing agent 2 who is close to the correct distribution in the long run holds almost all the wealth almost all the time. Figure \ref{fig:wealth-distribution-prefix} depicts the distribution during the initial $10$,$000$ steps, showing a different picture. The Bayesian (agent 1) still managed to hold half of the wealth about half of this time, in line with the typical survival time estimate in Observation \ref{obs:survival-time-gainst-log-regret}. 
This illustrates the trade-off between short-term gains for the inaccurate Bayesian from converging quickly to a distribution close to the correct one, and the long-term dominance of the learner who converges to the correct distribution.

\begin{figure}[t!]
\vspace{-5pt}
    \centering
    \begin{subfigure}[b]{0.325\textwidth}
        \centering
        \includegraphics[width=1.16\textwidth]{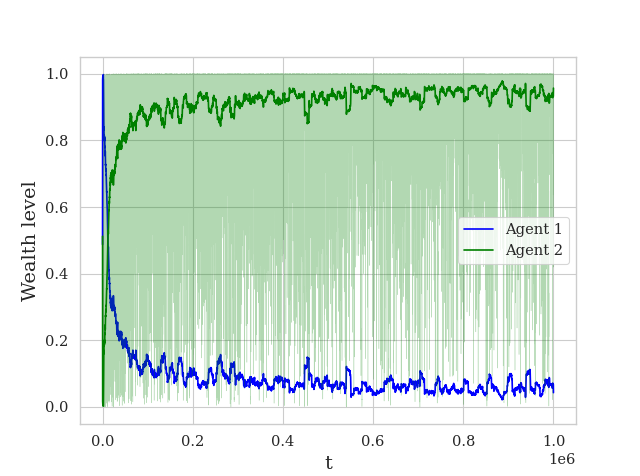}
        \vspace{-6pt}
        \caption{Wealth Dynamics. In green: player 2's wealth at each step. Solid lines show sliding window averages over the recent 10,000 steps for each $t$. }
        \label{fig:wealth-dynamics}
    \end{subfigure}
    \hfill
    \begin{subfigure}[b]{0.325\textwidth}
        \centering        \includegraphics[width=1.07\textwidth]{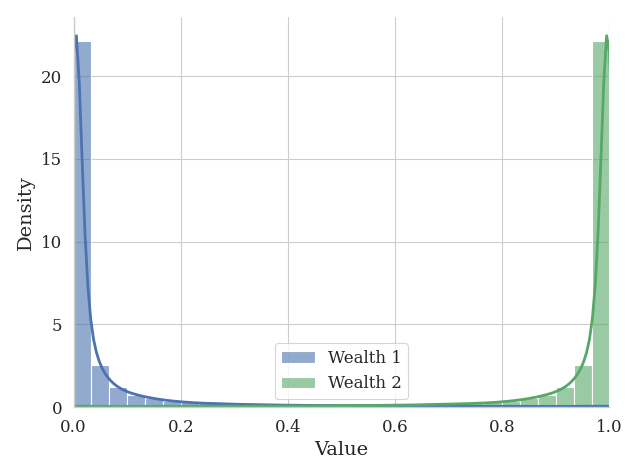}
        \vspace{-6pt}
        \caption{Wealth distribution in the second half of the $10^6$-step simulation. The distribution shown in blue is the wealth of player 1 and in green is of player 2.}
        \label{fig:wealth-distribution-suffix}
    \end{subfigure}
    \hfill
    \begin{subfigure}[b]{0.325\textwidth}
        \centering
        \includegraphics[width=1.07\textwidth]{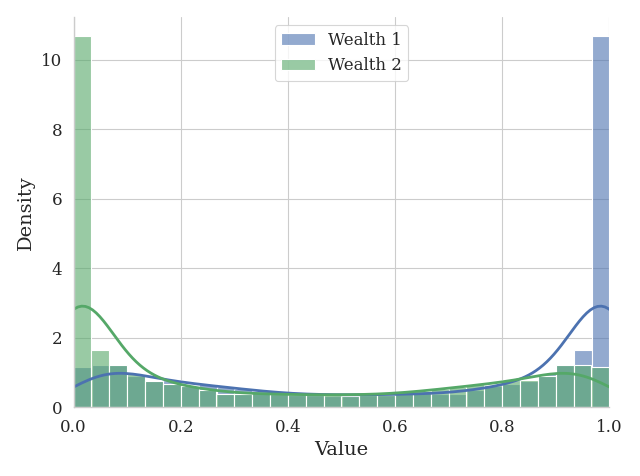}
        \vspace{-6pt}
        \caption{Wealth distribution in the 10,000 initial steps of the dynamic. The distribution shown in blue is the wealth of player 1 and in green is of player 2.}
        \label{fig:wealth-distribution-prefix}
    \end{subfigure}
    \caption{Example of wealth dynamics in competition between an inaccurate Bayesian learner with an error in its support (Agent 1) and a no-regret learner (Agent 2) who converges to the correct state distribution $q$.}
    \label{fig:wealth-figures}
\end{figure}
\noindent
\textbf{Imperfect Bayesian Learning:}
We begin by demonstrating how a Bayesian with a misspecified finite prior may initially do well, but eventually vanishes from the market.  
Consider the following simple example. There is a market with two assets where agent 1 is a Bayesian learner and agent 2 is a UCB learner, starting from equal wealth shares.\footnote{We use UCB for simplicity. The main factor here is the different regret rates and not the particular algorithm.} The distribution of states is $q = (0.7, 0.3)$, i.e., state 1 occurs with probability $0.7$, and state 2 otherwise. The Bayesian learner has a prior consisting of three hypotheses: $(0.8, 0.2)$, $(0.9, 0.1)$, $(0.3, 0.7)$, where the first is closest to $q$ in relative entropy, but still has an error. The UCB learner considers a similar set of distributions as its action set: $(0.7, 0.3)$, $(0.9, 0.1)$, $(0.3, 0.7)$, but the first of these being the correct distribution $q$. 

\begin{figure}[t!]
    \centering
    \vspace{-8pt}
    \begin{subfigure}[b]{0.325\textwidth}
        \centering
        \includegraphics[width=1.1\textwidth]{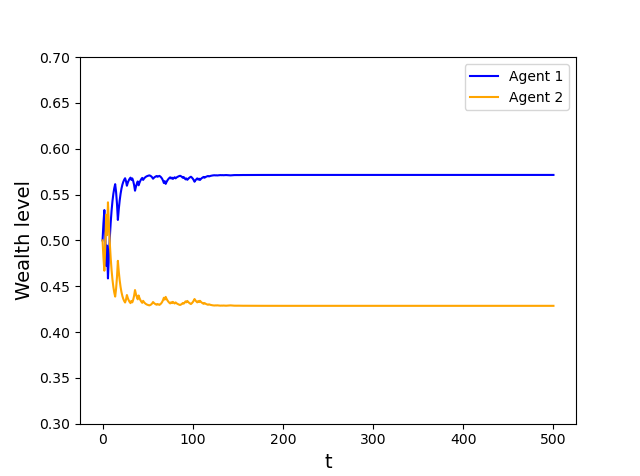}
        \vspace{-8pt}
        \caption{Two Bayesian learners.}
        \label{fig:wealth-dynamics-two-Bayesians}
    \end{subfigure}
    \hfill
    \begin{subfigure}[b]{0.325\textwidth}
        \centering
        \includegraphics[width=1.1\textwidth]{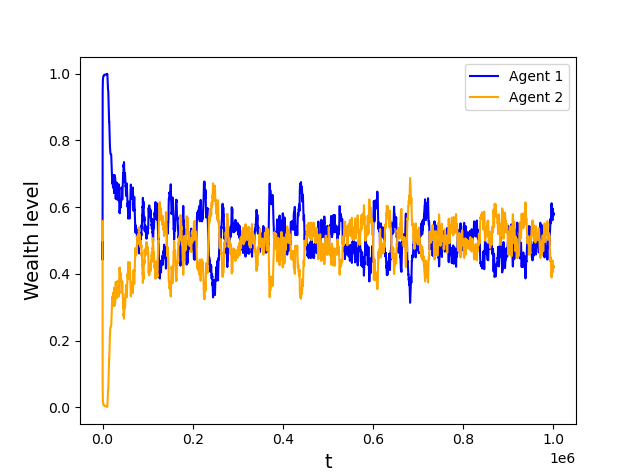}
        \vspace{-8pt}
        \caption{Two no-regret learners.}
        \label{fig:wealth-dynamics-two-UCB-agents}
    \end{subfigure}
    \hfill
    \begin{subfigure}[b]{0.325\textwidth}
        \centering
        \includegraphics[width=1.1\textwidth]{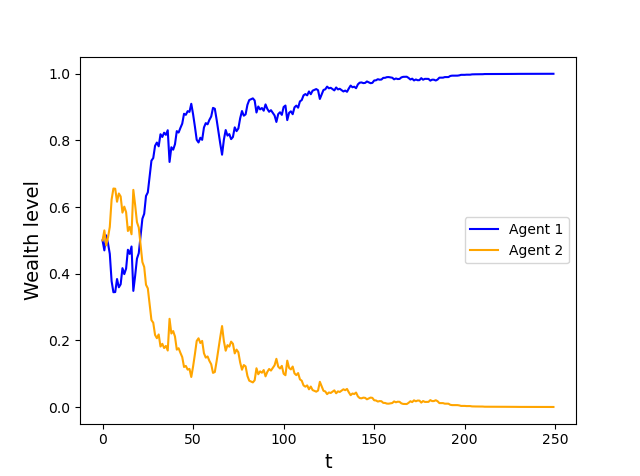}
        \vspace{-8pt}
        \caption{Bayesian vs. no-regret learner.}
        \label{fig:wealth-dynamics-Bayesian-vs-UCB}
    \end{subfigure}
    \caption{Wealth dynamics in a two-state two-player market. Figure \ref{fig:wealth-dynamics-two-Bayesians} shows the competition between two Bayesians where the agents have action sets of different sizes which include the correct model $q$. Agent $1$ considers three models and agent $2$ considers the same three models and one additional model. Figure \ref{fig:wealth-dynamics-two-UCB-agents} shows the same scenario for no-regret learners both using UCB. Figure \ref{fig:wealth-dynamics-Bayesian-vs-UCB} shows the competition between a Bayesian (agent 1) and a regret-minimizing UCB learner (agent 2) considering the same set of models.}
    \vspace{-5pt}
    \label{fig:wealth-figures-2}
\end{figure}

\bfpar{Dynamics with a Constant Regret Difference} Next, we consider an example of two Bayesian learners, both with priors including the correct distribution of states. By Theorem \ref{thm:regret-of-a-Bayesian}, their regret levels can differ only by an additive constant, as both are perfect Bayesians with constant regret. Consequently, these agents are bound to coexist in the market. The example demonstrates that while the quality of the priors---provided they include a correct model---does not affect survival, the initial learning phase, which depends on the prior, does influence the steady-state partition of wealth between the agents.  
In this example, as in the previous one, there are two states, with the underlying distribution given by $q = (0.7, 0.3)$. Agent 1 has a prior that includes the correct distribution and two other models: $(0.8,0.2), (0.7,0.3), (0.6,0.4)$. Agent 2's prior includes one additional model: $(0.8,0.2), (0.7,0.3), (0.6,0.4), (0.5, 0.5)$. Both agents start with a uniform prior over the options they consider. Thus, agent 2 places smaller weight initially on the correct model than does agent 1. Figure \ref{fig:wealth-dynamics-two-Bayesians} depicts the wealth dynamics of these learners, starting from equal wealth levels. 

Figure \ref{fig:wealth-dynamics-two-UCB-agents} shows the same scenario in competition between two no-regret learners using UCB. While the worst-case bound of UCB suggests that the learner with the larger action space could suffer higher regret (by a constant factor), we see empirically that under a steady stochastic process, the agents are comparable in the long run, where only in the initial learning phase agent 1 with the smaller action set has an advantage. In comparison to the Bayesian learners in Figure \ref{fig:wealth-dynamics-two-Bayesians}, no-regret dynamics continue to be randomized and ``forget'' the gap in wealth from the initial learning phase. 

\bfpar{Comparison of Bayesian vs. No-Regret Learners} Finally, in Figure \ref{fig:wealth-dynamics-Bayesian-vs-UCB} we present an example of the wealth dynamics of a Bayesian learner (agent 1) in competition with a UCB no-regret learner (agent 2). The system is similar to the ones described above, with two states and $q = (0.7,0.3)$. Both agents consider the models $(0.8,0.2), (0.7,0.3), (0.6,0.4)$, so for both,  the correct model $q$ is in the support. As expected according to Theorem \ref{thm:regret-of-a-Bayesian}, the perfect Bayesian who has constant regret drives out of the market the no-regret learner whose regret is increasing in time.

\section{Discussion}\label{sec:discussion}
Our analysis uses a stylized  market environment to 
study competition between heterogeneous learning agents in asset markets. The analysis isolates a force that is central in competitive investment settings: under multiplicative wealth dynamics, differences in learning speed translate into differences in long-run survival.  
In particular, in such environments, relative wealth evolves according to cumulative differences in growth rates, and the survival criterion formalizes 
how sustained differences in growth lead to vanishing or dominant wealth shares over time.  

Following the prior literature, we study learners who optimize log-of-wealth objectives. 
As discussed in Section \ref{sec:model}, there is a longstanding debate about the centrality of log utility in work on financial markets and whether it is normatively justified. From the perspective of online learning, the log-of-wealth objective has the technical advantage that it makes growth rates additive over time, and so allows the use of no-regret learning methods in such compounding payoff settings.
  
The Arrow-securities model serves here as an analytical device for studying this force. The complete-market structure provides a clean representation of state-contingent investment and makes transparent the relation between learning, regret, and wealth accumulation. For further exposition on complete markets and the Arrow model, see Appendix~\ref{sec:appendix-Arrow-securities}. The broader point, however, is not tied specifically to the Arrow formulation. What is essential is that agents repeatedly allocate wealth across risky opportunities, and that realized payoffs feed back into future investment capacity. The Arrow-securities representation 
should be understood as a convenient basis for the analysis:  
investors may act in the natural space of investment opportunities and need not formulate their decisions in Arrow-security coordinates.

An additional modeling choice is to restrict attention to full-support investment profiles. In multiplicative environments, allocating all investment to a strict subset of states may lead to irreversible loss. Any reasonable long-run learning rule must therefore avoid investments that risk bankruptcy in a single step. From this perspective, full support is not just a technical assumption, but a basic robustness requirement under uncertainty.

A central theme of the paper is the role of Bayesian learning as a benchmark of model-based learners. On the one hand, our results align with the economics literature on market selection: a perfect Bayesian with positive prior on the true model survives even against optimal competitors and drives out slower learners. We provide a precise characterization of this statement in the language of regret: survival is equivalent to maintaining a bounded regret gap relative to all  competitors. 

On the other hand, our analysis shows a complementary lesson. The power of Bayesian learning rests on strong informational assumptions. If the true model lies even slightly outside the prior's support, or if the update rule is implemented with small errors, regret becomes linear, which leads to long-run failure in competition. Thus, while Bayesian learning is optimal under correct specification, approximate Bayesian learning---which is closer to what may be implementable in practice---does not give approximately optimal performance in this environment; instead, it may generate short-term gains, but leads to vanishing wealth in the long run. In contrast, payoff-based no-regret methods require far less knowledge of the data-generating process and guarantee sublinear regret without specifying a model class. When accurate structural knowledge is unavailable, such robustness becomes a decisive advantage. Section~\ref{sec:robust-Bayes-update} builds on this insight and takes a step toward bridging these approaches.

Our emphasis in this study is on long-run survival, but the connection between regret and wealth is not merely asymptotic. At any horizon, differences in regret translate directly into differences in wealth shares. In particular, this connection shows how regret rates translate into vanishing rates: for instance, a learner incurring logarithmic regret against a constant-regret competitor sees their relative wealth vanish at a polynomial rate, whereas a linear regret gap leads to exponentially vanishing wealth. This allows one to evaluate transient survival times, as illustrated in Section~\ref{sec:noisy-priors-Bayesian}. Because the relation holds for any investment strategy, regret rates provide a meaningful performance metric in competitive markets. Even when learners are not explicitly minimizing regret, estimating their regret rates, whether analytically or empirically, offers insight into their competitive fitness.

Finally, our model focuses on investment markets, where payoffs compound multiplicatively. In other competitive environments, such as auctions or pricing in retail platforms where learning agents are broadly deployed, payoffs are often modeled with additive per-period payoffs (with or without budget constraints). In such settings, slower learners may still earn positive payoffs each period without facing extinction. However, if feedback effects link current performance to future capacity---for example, through feedback between past performance and future budget constraints, or through reinvestment of profits---relationships analogous to Equation~(\ref{eq:logshare-as-regret}) may arise, tying speed of learning to long-run dominance. More broadly, a growing literature shows that the choice and configuration of learning algorithms in markets can themselves generate strategic effects that alter market outcomes, including algorithmic collusion in pricing and inefficient equilibria in auctions (see references in Section~\ref{sec:related-work}). Our results suggest that in environments with multiplicative wealth dynamics, the choice of learning method may have especially strong long-run consequences.


\section{Conclusion}\label{sec:Concln}
As markets become increasingly populated by learning agents, understanding which learning approaches succeed in competitive environments becomes a foundational question. Our work brings together perspectives from economics and computer science by analyzing the survival and performance of Bayesian and no-regret learners in complete asset markets. We connect the notions of regret and survival in a competitive market, and show that regret minimization, while robust and widely applicable, does not guarantee survival: a Bayesian investor with correct support and proper updates can drive out no-regret learners, even if their regret is vanishing. Conversely, Bayesian learning is fragile---slight misspecification in priors or slight mistakes in belief updates may lead to linear regret and eventual extinction from the market.

These insights motivate the study of more robust ways to utilize Bayesian learning and combine advantages from both approaches. 
We proposed and analyzed two simple variants of Bayesian learning: one uses regularization to deal with distribution shifts, and the other uses a hybrid approach that follows Bayesian updates while tracking regret relative to a no-regret benchmark. The resulting methods preserve the fast convergence of Bayesian learning in well-specified settings while improving robustness when the environment changes or the prior is misspecified, offering simple and practical steps toward a best-of-both-worlds approach. 
Our findings highlight the speed of learning as a crucial determinant of survival and show that even constant factors in regret rates matter in competitive markets. We view the development of robust 
Bayesian learning methods, guided by principles from online learning theory, as a promising direction for future work. 

Another interesting direction for future work is to consider the impact of competing learners on aggregate outcomes such as equilibrium  prices. Both long-run security prices, determined by who survives, and intermediate term security prices, determined by the evolution of investment rules and wealth shares, could be analyzed. Our focus has been on the survival of individual agents and they determine long run prices. But intermediate term prices also depend on how agents respond to the feedback they receive over time from markets and how wealth shares evolve between these agents. Both Bayesian and standard no-regret learning cause agents to invest more in assets that pay off and wealth share increases for agents who invest relatively more in these assets. So intermediate term security prices respond positively to past payoffs. These learning and wealth share evolution effects induce positive serial correlation---or momentum---in security prices.

\section*{Acknowledgments} 
We thank the anonymous referees at EC 2025 for useful feedback on an earlier version of the paper. 
\'Eva Tardos was supported in part by 
AFOSR grant FA9550-23-1-0410,  AFOSR grant FA9550-231-0068, and ONR MURI grant N000142412742.

\appendix
\section*{APPENDIX}
\section{Proofs}\label{sec:appendix-proofs}
We provide in the following the proofs for the theorems stated in the main text.

\vspace{5pt}
\begin{proof} (Theorem \ref{thm:survival-by-finite-regret-gap})
 This result follows from the analysis leading to Equation (\ref{eq:logshare-as-regret}). For the first part of the theorem, we have $R^n(T) - R^m(T) = -\log(r^{nm}) - \log(r_0^{nm}) = \log(r^{mn}) + const$. If the left-hand side diverges to $+\infty$, then $r^{mn} = w^m/w^n$ diverges as well, which is only possible if $w^n \rightarrow 0$. For the second part, clearly this is a necessary condition due to the first part of the theorem: if an agent vanishes, it does not survive. To see that this condition is sufficient, if there exists a constant $c > 0$ such that $R^n(T) - R^m(T) < c$ at all times and for all $m \neq n$, then the wealth ratio $r^{mn}$ with any other player is bounded, and thus $w^n$ is bounded from below. 
\end{proof}
\vspace{5pt}

\begin{proof} (Theorem \ref{thm:regret-of-q}) Consider an agent indexed by $q$ using the state distribution $q$ as its investment strategy; i.e., $\alpha^q_{t} = q$ for all $t$, and we denote this constant sequence by $q_{1:T}$. Agent $q$ competes with a magic agent. The regret of the agent playing strategy $q$ is denoted $R^T(q_{1:T})$, and for each $T$ denote its expectation by $R = \E[R^T(q_{1:T})]$. On the one hand, using the fact that $\hat{q}$ has zero regret by definition, 
from Equation \ref{eq:logshare-as-regret} we have $$ R^T(q_{1:T}) = \log(r^{\hat{q}q}_T) - \log(r^{\hat{q}q}_0). $$ 
On the other hand, taking expectation of Equation \ref{eq:logshares}, we have 
\begin{align*} 
\E \big[ & \log(r^{\hat{q}q}_T)\big]\\ 
& = \E \big[\sum_s \sum_t \mathbf{1}_{s_t = s} \log(\frac{\alpha^{\hat{q}}_{s}}{q_{s}})\big]+ \log(r^{\hat{q}q}_{0}) \\ 
& = T \E \big[\sum_s \frac{n^s_T}{T} \log(\frac{n^s_T/T}{q_s})\big]+ \log(r^{\hat{q}q}_{0}) \\ 
& = T \cdot \E[I_{\hat{q}}(q)] + \log(r^{\hat{q}q}_{0}),  
\end{align*} 
where $n^s_T$ is the empirical count of state $s$ until time $T$. 
And thus, we get 
$$ 
R = T \cdot \E \big[I_{\hat{q}}(q)\big]. 
$$ 
That is, the expected regret equals the relative entropy (a.k.a. KL-divergence) between an empirical distribution and the state distribution scaled by $T$. 
Here, we use a result from information theory, showing that the expected KL-divergence of the empirical distribution decreases as the alphabet size divided by the sample size\footnote{We are interested in the expectation. \cite{mardia2020concentration} also provide concentration bounds for this result.} (see, e.g., Lemma 13.2 in \cite{polyanskiy2024information}). 
In our case, 
$\E[I_{\hat{q}}(q)] \equiv \E[D_{KL}(\hat{q}||q)] = \frac{S-1}{2T} + o(\frac{1}{T})$. 
Thus, 
$ R = \frac{S-1}{2} + o(1) $, 
and the expected regret is bounded by a constant depending only on the number of states. 
\end{proof}
\vspace{5pt}

\begin{proof} (Theorem \ref{thm:regret-of-a-Bayesian}) 
Starting from the second point, we establish a  pathwise bound. The realized wealth of the Bayesian is a weighted average of the wealths of the models in its prior: $w_T^B = \sum_k \lambda_0^k w_T^{\theta^k}$. Because the true distribution $q$ is in the prior, we have on every single sample path that $w_T^B \ge \lambda_0^q w_T^q$. Using Equation~\ref{eq:logshare-as-regret} that connects wealth ratios to regret differences, we obtain:
\begin{equation*}
R_T^B - R_T^q = 
\log\Big(\frac{w_T^q}{w_T^B}\Big) 
- 
\log\Big(\frac{w_0^q}{w_0^B}\Big)
\leq 
\log\Big(\frac{1}{\lambda_0^q}\Big)
- 
\log\Big(\frac{w_0^q}{w_0^B}\Big).
\end{equation*}

For the first point, taking the expectation of the pathwise bound yields $\mathbb{E}[R_T^B] \le \mathbb{E}[R_T^q] + \log(1/\lambda_0^q)$. By Theorem \ref{thm:regret-of-q}, the expected regret of playing $q$ is bounded by a constant. Therefore, the Bayesian's expected regret is also bounded by a constant at all times.
\end{proof}
\vspace{5pt}

\begin{proof} (Theorem \ref{thm:wrong-Bayesians-vanish}) 
Let $n$ be the index of a Bayesian agent with $q$ not within its support, and let $q' \neq q$ be the strategy within the support that is closest to $q$ in relative entropy. By Proposition \ref{thm:bayesian-convergence}, the strategy used by the Bayesian agent, $\alpha^n_t$, converges almost surely to $q'$. Hence, intuitively, for sufficiently large $t$, the strategies remain bounded away from $q$, leading to linear regret from that point onward. 
In more detail, let $n$ be the index of a Bayesian agent with $q$ not within its support, and let $q' \neq q$ be the strategy within the support closest to $q$ in relative entropy. By Proposition 1, the agent's strategy $\alpha_t^n$ converges almost surely to $q'$.
The utility difference at time $t$ relative to strategy $q$ is $U_t(q) - U_t(\alpha_t^n)$. Since $\alpha_t^n \to q'$ almost surely and the states $s_t$ are drawn i.i.d.\ from $q$, the Strong Law of Large Numbers yields:

$$\lim_{T \to \infty} \frac{1}{T} \sum_{t=1}^T \left(U_t(q) - U_t(\alpha_t^n)\right) = I_q(q') \quad a.s.$$

Because $q' \neq q$, we have $I_q(q') > 0$. Thus, agent $n$'s regret relative to $q$ grows linearly almost surely. Because $\hat{q} \to q$ almost surely, the realized pathwise regret of strategy $q$ is sublinear ($T \cdot I_{\hat{q}}(q) = o(T)$). Therefore, the Bayesian's total realized regret $R^n(T)$ grows linearly in $T$ almost surely.
Now consider a competing no-regret learner $m$ with sublinear regret $R^m(T)$. Then, the regret difference $R^n(T) - R^m(T) \to \infty$ almost surely. By Theorem 1, the imperfect Bayesian agent vanishes from the market.
\end{proof}
\vspace{5pt}

\begin{proof} (Theorem \ref{thm:wrong-update-Bayesians-vanish}) 
Let $\epsilon > 0$. 
The log-likelihood ratio under the noisy update rule takes the following form:
\begin{align}\label{eq:inacurate-update-log-ratio}
    \log\Big(
    \frac{P_t(\theta_a)}{P_t(\theta_b)}
    \Big) 
    &=  
    (1 + \lmbt) L(s_t) 
    \nonumber \\&+ 
    (1 - \lmbt) \log \Big(\frac{P_{t-1}(\theta_a)}{P_{t-1}(\theta_b)}\Big) \nonumber \\
    &= 
    (1 + \lmbt)\sum_{\tau=0}^{t-1} L(s_{t - \tau}) \prod_{k=0}^\tau (1 - \eta_k)\\ 
    \nonumber 
    &+ 
    \log\Big(
    \frac{P_0(\theta_a)}{P_0(\theta_b)}
    \Big)
    \prod_{\tau=0}^t
    (1 - \eta_{\tau}),
\end{align}
where the empty product equals one, and we define $\eta_0 = 0$.
The products can be simplified:
\begin{align*}
    \prod_{\tau=0}^t
    (1 - \eta_\tau)  = 
    (1 - \lmb)^{n_+(t)}
    (1 + \lmb)^{n_-(t)},
\end{align*}
where $n_+(t)$ is a binomial random variable counting the number of times $\eta_{\tau \leq t} = \lmb$, and $n_-(t) = t - n_+(t)$. 
The last term can be written as  
\begin{align*}
    &
    \log\Big(
    \frac{P_0(\theta_a)}{P_0(\theta_b)}
    \Big)
    \cdot 
    (1 - \lmb)^{n_+(t)}
    (1 + \lmb)^{n_-(t)} \\
    &= 
    \log\Big(
    \frac{P_0(\theta_a)}{P_0(\theta_b)}
    \Big)
    \cdot 
    \Big[
    \Big(
    \frac{1 - \lmb}
    {1 + \lmb}\Big)^{n_+(t)}
    \Big] \cdot
    (1 + \lmb)^t.
\end{align*}

Intuitively, since $\E[n_+] = T/2$, this should be close to $(1 - \lmb^2)^{\nicefrac{t}{2}}$ with high probability as $t \rightarrow \infty$, which converges to zero for $\eta < 1$. Formally, we have the following claim.
    \begin{claim}
        Let $\epsilon > 0$.
        $
        \lim_{t \rightarrow \infty} \Pr \Big[ \left(\frac{1 - \lmb}{1 + \lmb}\right)^{n_+(t)} \cdot (1 + \lmb)^t > \epsilon \Big] = 0.
        $
    \end{claim}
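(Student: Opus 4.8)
The plan is to reduce the claim to a standard large deviations estimate for the binomial random variable $n_+(t)$. First I would take logarithms: the event $\left(\frac{1-\lmb}{1+\lmb}\right)^{n_+(t)}(1+\lmb)^t > \epsilon$ is equivalent to
\[
n_+(t)\log\!\Big(\frac{1+\lmb}{1-\lmb}\Big) < t\log(1+\lmb) - \log\epsilon .
\]
Since $\lmb \in (0,1)$, the coefficient $c_1 := \log\!\big(\tfrac{1+\lmb}{1-\lmb}\big)>0$, so this says $n_+(t) < \frac{t\log(1+\lmb) - \log\epsilon}{c_1}$. The key quantitative point is that the slope $\frac{\log(1+\lmb)}{c_1}$ is strictly less than $\tfrac12$: indeed, using $\log(1+\lmb) < \tfrac12\log\!\big(\tfrac{1+\lmb}{1-\lmb}\big)$ (equivalently $(1+\lmb)^2 < \tfrac{1+\lmb}{1-\lmb}$, i.e. $(1-\lmb)(1+\lmb)^2<1+\lmb$, i.e. $1-\lmb^2<1$), we get $\frac{\log(1+\lmb)}{c_1} = \rho$ for some $\rho < \tfrac12$ depending only on $\lmb$. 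Hence for all large $t$ the event is contained in $\{n_+(t) < (\rho+\delta)t\}$ for a fixed $\delta>0$ with $\rho+\delta<\tfrac12$, once $t$ is large enough that the $-\log\epsilon / c_1$ additive term is absorbed.

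Next I would invoke a Chernoff/Hoeffding bound: since $n_+(t)\sim\mathrm{Bin}(t,\tfrac12)$ has mean $t/2$ and $(\rho+\delta) < \tfrac12$, we have $\Pr[n_+(t) < (\rho+\delta)t] \le \exp(-c_2 t)$ for a constant $c_2 = c_2(\lmb,\delta)>0$ (e.g. $c_2 = 2(\tfrac12-\rho-\delta)^2$ by Hoeffding). Therefore the probability in the claim is at most $\exp(-c_2 t)$ for all sufficiently large $t$, which tends to $0$ as $t\to\infty$. This completes the proof of the claim. Everything after the logarithmic reduction is routine, so I would state the Chernoff bound, plug in, and conclude.

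The only mild subtlety — and the step I would be most careful about — is the inequality $\frac{\log(1+\lmb)}{\log((1+\lmb)/(1-\lmb))} < \tfrac12$, which is what makes the exponential decay work; without the strict gap one would only get that the event has probability bounded away from $1$, not tending to $0$. But as shown above this inequality is equivalent to $1-\lmb^2 < 1$, which holds for every $\lmb>0$, so there is no real obstacle, only the need to pick $\delta$ small enough that $\rho+\delta<\tfrac12$ and then choose $t$ large enough that the constant term $-\log\epsilon$ is dominated by the linear-in-$t$ slack. One should also note that for $-\log\epsilon < 0$ (i.e. $\epsilon>1$) the claim is even easier since the threshold only shrinks, and the same argument applies verbatim.
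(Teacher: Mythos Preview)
Your proof is correct and follows essentially the same approach as the paper: both arguments take logarithms to reduce the event to a threshold on $n_+(t)$, establish that the slope $\frac{\log(1+\lmb)}{\log((1+\lmb)/(1-\lmb))}$ is strictly less than $\tfrac12$ via the identity $1-\lmb^2<1$, and then apply Hoeffding's inequality to obtain exponential decay. The only cosmetic difference is that you introduce an auxiliary slack parameter $\delta$ to absorb the additive constant, whereas the paper keeps the constant $c_\epsilon$ explicit throughout and applies Hoeffding directly to the deviation $ct+c_\epsilon$.
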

    
    \begin{proof}
        Let $\delta > 0$. We need to show that there exists $T$ such that for all $t > T$, the event
        \[
        \Big(\frac{1 - \lmb}{1 + \lmb}\Big)^{n_+(t)} \cdot (1 + \lmb)^t > \epsilon
        \]
        has probability less than $\delta$. 
        Dividing by $(1 + \lmb)^t$ and taking the logarithm, we get 
        $
        n_+(t) \cdot \big(\log(1 - \lmb) - \log(1 + \lmb)\big) > \log(\epsilon) - t \log(1 + \lmb).
        $.
        Rearranging, we have
        \[
        n_+(t) < \frac{t \log(1 + \lmb) - \log(\epsilon)}{\log(1 + \lmb) - \log(1 - \lmb)}.
        \]
        Denote $c_{\epsilon} = \frac{\log(\epsilon)}{\log(1 + \lmb) - \log(1 - \lmb)}$,
        so
        $
        n_+(t) < \frac{\log(1 + \lmb)}{\log(1 + \lmb) - \log(1 - \lmb)} \cdot t - c_{\epsilon}.
        $ 
        Next, we observe that the coefficient of $t$ is strictly less than $1/2$. To see this, define
        \begin{align*}
        c &= \frac{1}{2} - \frac{\log(1 + \lmb)}{\log(1 + \lmb) - \log(1 - \lmb)} \\
        &= 
        -\frac{\log(1 + \lmb) + \log(1 - \lmb)}{2 \big(\log(1 + \lmb) - \log(1 - \lmb)\big)}\\ 
        &= 
        \frac{-\log(1 - \lmb^2)}{\text{Positive Number}} > 0.
        \end{align*}
        Thus, we obtain that the event we wish to bound is
        $
        n_+(t) < \Big(\frac{1}{2} - c\Big)t - c_{\epsilon}.
        $
        Using Hoeffding's inequality, using that $\E[n_+] = \nicefrac{t}{2}$, we bound the probability of this event:
        \[
        \Pr\Big[\big|n_+(t) - \nicefrac{t}{2}\big| 
        > 
        ct + c_{\epsilon}\Big] 
        < 
        2e^{-\frac{2(ct + c_{\epsilon})^2}{t}} < 
        2e^{-\frac{1}{2}ct}.
        \]
        Setting $T = \lceil \frac{2}{c} \ln\left(\frac{2}{\delta}\right)
        \rceil$, we conclude that for all $t > T$, the probability of the event
        \[
        (1 + \lmb)^t \cdot \Big(\frac{1 - \lmb}{1 + \lmb}\Big)^{n_+(t)} > \epsilon
        \]
        is less than $\delta$, as required. Hence, the product approaches zero for large $t$:
        $
        \lim_{t \rightarrow \infty} \prod_{\tau=1}^t (1 - \eta_\tau) = 0
        $ almost surely, proving the claim.
    \end{proof}

    \vspace{5pt}
    \noindent
    {\em Remark:} It is worth noting here that the almost-sure behavior of the product is very different from its expectation, which equals one: 
    $$\sum_{k=0}^T \binom{T}{k} p^k(1 - \lmb)^k \cdot (1 - p)^{T - k} (1 + \lmb)^{T - k} = \big(p (1 - \lmb) + (1 - p) (1 + \lmb)\big)^T = 1 \ \text{ (since $p = 1/2$)}.$$

    The above claim implies that the second term in Equation (\ref{eq:inacurate-update-log-ratio}) converges to zero almost surely. In other words, the learner forgets the prior, much like a perfect Bayesian learner (albeit at a somewhat faster rate in this case). For the first term of Equation (\ref{eq:inacurate-update-log-ratio})
    $$(1 + \lmbt)\sum_{\tau=0}^{t-1} L(s_{t - \tau}) \prod_{k=0}^\tau (1 - \eta_k),$$
    a similar argument applies: 
    with high probability the product for large values of $\tau$ contributes negligibly to the sum, but there remains a random contribution from the smaller $\tau$ terms.  
    To see why there is no convergence, observe that the coefficient $(1 + \lmbt)$ oscillates randomly around $1$. More formally, let 
    $Y_\tau = \prod_{k=0}^\tau (1 - \eta_k)$, noting that $Y_0 = 1$, and let $Z_t  = \sum_{\tau=0}^{t-1}L(s_{t-\tau})Y_\tau$. Expanding the sum, we have: $Z_t = L(s_t)Y_0 + \dots + L(s_1)Y_{t-1}$ and for the next time step $Z_{t+1} = L(s_t)Y_0 + \dots + L(s_1)Y_{t}$. Since $Y_t = (1-\lmbt)Y_{t-1}$, this simplifies to $Z_{t+1} = (1 - \lmbt)Z_t + L(s_{t+1})$. 
    Clearly, this sequence does not converge, as both $L(s_{t+1})$ and $\lmbt$ take independent random values at each step.
\end{proof}
\vspace{5pt}

\begin{proof} (Lemma \ref{thm:expected-regret-lemma}) 
    The result follows from the definition of regret as a utility difference, as discussed in Section \ref{sec:compare}. Specifically, the regret of strategy sequence $\alpha^n_{1:T}$ is $R^T(\alpha^n_{1:T}) = U(\alpha^{\hat{q}}_{1:T}) - U(\alpha^n_{1:T})$, also, $R = \E[U(\alpha^{\hat{q}}_{1:T}) - U(q_{1:T})]$, and denote the regret of $\alpha^n_{1:T}$ with respect to the benchmark $q_{1:T}$ as $R^\alpha_q = U(q_{1:T}) - U(\alpha^n_{1:T})$. We have 
    $
    \E[R^T(\alpha^n_{1:T})] = \E[R^\alpha_q] + R = \E\Big[ T \sum_s \frac{n^s_T}{T} \log(\frac{q_s}{\alpha^n_{s}}) \Big] + R = T \cdot I_q(\alpha^n) + R
    $.   
\end{proof}
\vspace{5pt}

\begin{proof} (Proposition~\ref{thm:dist-shift-linear-regret-proposition})
    Consider a market process with two states and $T$ time steps. Let $1/2 < p < 1$, and define $q_1 = (p, 1-p)$, $q_2 = (1-p, p)$, and $Q = \{q_1, q_2\}$. Suppose that for the first $T_1$ steps the data-generating distribution is $q_1$, and for the remaining $T_2 = T - T_1$ steps it is $q_2$. At any time $t \leq T_1$, taking expectation in Equation~(\ref{eq:log-odds-by-n}), we have for a Bayesian learner starting from a uniform prior:
    $$
    \E[\log \frac{\lambda^1_t}{\lambda^2_t}] = \E[n^1_t] \cdot \log \frac{p}{1-p} + \E[n^2_t] \cdot \log \frac{1-p}{p},
    $$
    where, recall, $n^s_t$ denotes the number of times state $s$ has occurred up to time $t$.
    At time $T_1$ we get:
    $$
    \E[\log \frac{\lambda^1_{T_1}}{\lambda^2_{T_1}}] = 
    pT_1 \cdot \log \frac{p}{1-p} + (1-p)T_1 \cdot \log \frac{1-p}{p}.
    $$

    After the shift, at $t = T_1 + \tau$, using the same formula and the learner's state at time $T_1$, we have:
    $$
    \E[\log \frac{\lambda^1_t}{\lambda^2_t}] 
    = 
    \Big(
    pT_1 \cdot \log \frac{p}{1-p} + (1-p)T_1 \cdot \log \frac{1-p}{p} 
    \Big)
    +
    \Big(
    (1-p)\tau \cdot \log \frac{1-p}{p} + p\tau \cdot \log \frac{p}{1-p}
    \Big).
    $$
    Grouping terms, we get:
    $$
    \E[\log \frac{\lambda^1_t}{\lambda^2_t}] 
    = 
    p(T_1+\tau) \cdot \log \frac{p}{1-p} + (1-p)(T_1+\tau) \cdot \log \frac{1-p}{p}.
    $$

    Intuitively, for any $\tau < T_1$, we have $\E[n^1_{T_1 + \tau}] > \E[n^2_{T_1 + \tau}]$, and so $\lambda^1_t > \lambda^2_t$ during steps $t = T_1,\dots,T_1 + \tau$. This implies that during the second phase, the learner still assigns probability greater than $1/2$ to state~1, despite state~1 occurring with probability less than $1/2$. This discrepancy leads to regret proportional to $\tau$.     Concretely, take $T_1 = 2T/3$, $p = 3/4$, and $\tau = T/3$. Then during steps $t = T_1 + 1,\dots,T$ we have:
    $$
    \E[\log \frac{\lambda^1_t}{\lambda^2_t}] 
    > 
    \E[\log \frac{\lambda^1_T}{\lambda^2_T}] 
    = 
    \frac{3T}{4} \cdot \log 3 
    -
    \frac{T}{4} \cdot \log 3
    =
    \frac{\log 3}{2} T > \frac{1}{2} T.
    $$
    So the empirical probability the learner assigns to state 1 is at least $1/2$, whereas the true probability during the last $T/3$ steps is $1/4$. Therefore, similarly to the proof of Theorem \ref{thm:wrong-Bayesians-vanish}, 
    since there is a gap between the assigned and true probabilities, the regret accumulated during these steps after $2T/3$ is linear; in expectation it is at least the regret of the strategy $(\tfrac{1}{2}, \tfrac{1}{2})$, which is
    \[
    \frac{T}{3}\cdot I_{q_2}\left(\tfrac{1}{2}, \tfrac{1}{2}\right) = \frac{3 \log 3 + 4 \log 2}{4} \cdot \frac{T}{3} > \frac{T}{23}.
    \]
    During the first stage, until $2T/3$, the learner incurs a constant regret, so the total regret is linear in $T$.
\end{proof}
\vspace{5pt}

\begin{proof} (Theorem~\ref{thm:robust-Bayesian}) Let us consider a phase of the process in which the data generating distribution is $q$. For notational convenience, we overload the symbol $q$ to also denote the index of this distribution, so that $\lambda^q$ denotes the weight the learner assigns to this correct model.  
By the update rule~(\ref{eq:robust-Bayes-update}), for any $k \neq q$:
$$
\log \frac{\lambda^k_{t+1}}{\lambda^q_{t+1}} 
= 
\log \frac{\tilde{\lambda}^k_{t+1} + \epsilon_t}{\tilde{\lambda}^q_{t+1} + \epsilon_t}
=
\log \frac{\tilde{\lambda}^k_{t+1}}{\tilde{\lambda}^q_{t+1}} 
+ 
\log \big(1 + \tfrac{\epsilon_t}{\tilde{\lambda}^k_{t+1}} \big) 
- 
\log \big(1 + \tfrac{\epsilon_t}{\tilde{\lambda}^q_{t+1}} \big).
$$
Since for any $x > 0$ it holds that $0 < \log(1 + x) \leq x$, we have
$$
\log \frac{\lambda^k_{t+1}}{\lambda^q_{t+1}} 
\leq
\log \frac{\tilde{\lambda}^k_{t+1}}{\tilde{\lambda}^q_{t+1}} 
+
\frac{\epsilon_t}{\tilde{\lambda}^k_{t+1}}.
$$
Applying the Bayesian update rule for the log-odds and taking expectations, we get
$$
\E\Big[\log \frac{\lambda^k_{t+1}}{\lambda^q_{t+1}} \Big]
\leq
\log \frac{\lambda^k_t}{\lambda^q_t}
-
I_q(\theta^k)
+
\E\Big[ \frac{\epsilon_t}{\tilde{\lambda}^k_{t+1}} \Big].
$$
Now consider two cases. First, suppose that $\tilde{\lambda}^k_{t+1} < t^{-\tfrac{3}{2}}$. For $t$ large enough (e.g., $t > 6$), we have
$$
\lambda^k_{t+1} < \tilde{\lambda}^k_{t+1} + \epsilon_t \le t^{-\tfrac{3}{2}} + t^{-2} < t^{-\tfrac{4}{3}}.
$$
Intuitively, once $\tilde{\lambda}^k_{t+1}$ becomes small, it stays small.
In the second case, $\tilde{\lambda}^k_{t+1} \geq t^{-\tfrac{3}{2}}$. Then, we have
\begin{equation}\label{eq:robust-Bayes-update-convergence}
\E\Big[\log \frac{\lambda^k_{t+1}}{\lambda^q_{t+1}} \Big]
\leq
\log \frac{\lambda^k_t}{\lambda^q_t}
-
I_q(\theta^k)
+
O\big(t^{-\frac{1}{2}}\big).
\end{equation}
Since $I_q(\theta^k)$ is a positive constant for any $\theta^k \neq q$, we see that for large enough $t$, the log-odds of $k \neq q$ relative to $q$ decrease at a linear rate, implying that the posterior distribution approaches exponentially fast the true model $q$. Thus, after a number of steps depending logarithmically on the value of  $\lambda^q$ at the start of the phase,  
we reach---with high probability over the randomness in the data-generating process---the first case where $\tilde{\lambda}^k_{t+1} < t^{-\tfrac{3}{2}}$. 

\vspace{5pt}
\noindent
\textbf{Constant regret for a stationary process:} 
Given that $\lambda_0^q$ is a constant, after some constant time $T_0$ (independent of $T$), it holds with high probability that for all $t > T_0$ and all $k \neq q$, we have $\lambda^k_t < t^{-\tfrac{4}{3}}$.
Using this, we can bound the regret of player $n$ under our strategy. Let $I$ denote the relative entropy of the worst model in the prior: $I = \max_k I_q(\theta^k)$. With high probability,
$$
R_T(\alpha_{1:T}^n) - R \leq 2\sum_{t=1}^{T} I_q(\alpha^n_t) \leq 2\sum_{t=1}^{T} I \cdot t^{-4/3} < 2\sum_{t=1}^{\infty} I \cdot t^{-4/3} = 2I \cdot \zeta(4/3) \approx 2I \cdot 3.601.
$$
Where $\zeta(x) = \sum_{n=1}^\infty \tfrac{1}{n^x}$ denotes the Riemann zeta function which has finite value for $x>1$.

\vspace{5pt}
\noindent
\textbf{Logarithmic regret after a distribution shift:} The above analysis shows one side of the regularization method: when the weights on models $k \neq q$ are large, they decay exponentially fast, similar to the standard Bayesian update. 

The other side is what happens after a distribution shift. The key point is that while the weights on incorrect models become small enough to ensure constant regret, the regularization prevents them from becoming exponentially small. Suppose that after some history of length $T_1$, the stochastic process shifts to a new distribution, which we denote by $q'$. 
A lower bound on $\lambda^{q'}_{T_1+1}$ is the amount of regularization added to it in the previous step, namely,
$$
\lambda^{q'}_{T_1+1} \geq \frac{\epsilon_{T_1}}{1 + K \epsilon_{T_1}} \geq \frac{1}{K {T_1}^2}.
$$
The last inequality holds since the shift occurs at $t > 1$.

So at time $T_1 + 1$, we can think about the situation as a new learning process with a prior weight of at least $\lambda^{q'}_{T_1+1} \geq \frac{1}{K {T_1}^2}$. 
Because of exponential convergence, the time to ``forget'' this prior is logarithmic in $T_1$. 
Specifically, the exponent for model $k$ is $I_{q'}(\theta^k)$. 
Let $I$ now be the minimal entropy over models $k \neq q'$. The expected time $\tau$ it takes to reach $\lambda_t^k\le \lambda_t^{q'}$ for all $k \neq q'$ and eliminate this prior is $\tau = \tfrac{2}{I} \log T_1 + \text{constant}$. 
After this time, we already know that the regret incurred until the next distribution shift, or reaching the time horizon $T$, is bounded by a constant independent of $T$. 
\end{proof}
\vspace{5pt}

\begin{proof} (Theorem~\ref{thm:hybrid-Bayes-No-regret}) 
Let us start from point (2) in the theorem. Let
$$
\tau=\inf\{t\geq 1 : D_t\geq b_t\},
$$
with $\tau=\infty$ if no switch occurs, and let $\alpha^{\mathrm{sw}}_{1:T}$ denote the played sequence.
We first compare the switching strategy to the no-regret strategy pathwise. Consider two cases:

\noindent
Case 1: If $\tau > T$, then the algorithm plays Bayes throughout rounds $1,\dots,T$, and therefore
$$
R_T(\alpha^{\mathrm{sw}}_{1:T}) - R_T(\alpha^N_{1:T})
=
D_T
<
b_T. 
$$

\noindent
Case 2: If $\tau \leq T$, then the switching strategy agrees with Bayes up to time $\tau$ and with the no-regret strategy in all remaining steps. Hence
$$
R_T(\alpha^{\mathrm{sw}}_{1:T}) - R_T(\alpha^N_{1:T})
=
D_\tau.
$$
Since $\tau$ is the first crossing time, we have $D_{\tau-1} < b_{\tau-1}$. Since $b_t$ is increasing and $|d_t|\leq L$,
$$
D_\tau = D_{\tau-1} + d_{\tau} \leq b_{\tau-1} + L \leq b_T + L.
$$
Thus, for every realization,
$$
R_T(\alpha^{\mathrm{sw}}_{1:T})
\leq
R_T(\alpha^N_{1:T}) + b_T + L.
$$
Now, assuming any $O(\sqrt{T})$ guarantee for the no-regret algorithm generating $\alpha^N$ (or, in particular, logarithmic-regret portfolio-optimization algorithms such as those discussed in Section~\ref{sec:compare}) concludes the proof of the second part of the theorem. 

For the first part, to show that switching is rare under correct specification, we can look at the process $D_t$ under the prior mixture. This mixture is the distribution obtained by first drawing a model from the Bayesian prior and then generating the data from that model. 

Let $P_{\theta^k}$ denote the probability  over state sequences when $\theta^k$ is the true distribution, and let
$$
P_{\mathrm{mix}}=\sum_{k=1}^K \lambda_0^k P_{\theta^k}
$$
be the mixture distribution induced by the Bayesian prior. 
Note that $P_{\mathrm{mix}}$ is not the true data-generating distribution (under any fixed model $\theta^k$). It is only a proof device constructed for bounding the probability of a false switch in the case when the true distribution is in the support of the Bayesian prior.   

Let $\mathcal H_{t-1}$ be the history up to time $t - 1$, 
including the realization of any internal randomness of the no-regret algorithm. 
Under the prior, model $\theta^k$ is true with probability $\lambda_0^k$. After observing the history up to time $t - 1$, 
the posterior probability of model $\theta^k$ is $\lambda_{t-1}^k$. 
Since under model $\theta^k$, the next state is distributed according to $\theta^k$, 
the conditional distribution of the next state under the mix is the weighted average of these models' distributions with the posterior weights at $t - 1$:   

$$
P_{\mathrm{mix}}(s_t=s | \mathcal H_{t-1})
=   
\sum_{k=1}^K \lambda_{t-1}^k \theta_s^k.
$$
The Bayesian action is exactly 
$$
\alpha_t^B=\sum_{k=1}^K \lambda_{t-1}^k \theta^k,
$$
which maximizes conditional expected log growth under this distribution, so 
$$
\E_{P_{\mathrm{mix}}}[d_t | \mathcal H_{t-1}] \leq 0.
$$
Thus, the process $(D_t)$ is a supermartingale under the measure $P_{\mathrm{mix}}$, and its increments are bounded, satisfying $|d_t|\leq L$. 
By the Azuma--Hoeffding inequality we have  

$$
P_{\mathrm{mix}}[D_t\geq b_t]
\leq
\exp\left(-\frac{b_t^2}{2tL^2}\right)
=
\frac{6\delta\lambda_{\min}}{\pi^2 t^2}.
$$
Taking a union bound over all $t\geq 1$ and using the Basel identity $\sum_{u=1}^\infty \frac{1}{u^2}=\frac{\pi^2}{6}$ gives

$$
P_{\mathrm{mix}}[\tau<\infty]
\leq  
\sum_{t=1}^\infty  \frac{6\delta\lambda_{\min}}{\pi^2 t^2} 
=
\delta\lambda_{\min}.
$$
If the true distribution is $q=\theta^k$,  then
$$
P_{\mathrm{mix}}[\tau<\infty]
\geq  
\lambda_0^k P_{\theta^k}[\tau<\infty]
\geq  
\lambda_{\min} P_{\theta^k}[\tau<\infty].
$$
Hence
$$
P_{\theta^k}[\tau < \infty] \leq \delta.
$$
This is exactly the probability of a (false) switch when $q = \theta^k$. With $\delta = 1/T$, this implies that with probability at least $ 1 - 1/T$ the algorithm does not switch.\\

Finally, on the event $\{\tau = \infty\}$ the switching  strategy coincides with the Bayesian strategy, 
so 
$$
R_T(\alpha^{\mathrm{sw}}_{1:T}) = R_T(\alpha^B_{1:T}).
$$
Also since $U_t(\alpha)\in[\log\Delta,0]$, we have $R_T(\cdot)\leq TL$. Therefore,

$$
\E[R_T(\alpha^{\mathrm{sw}}_{1:T})]
\leq   
\E[R_T(\alpha^B_{1:T})]
+  
P_q(\tau<\infty) TL
\leq  
\E[R_T(\alpha^B_{1:T})] + \delta TL. 
$$
With $\delta=1/T$, we have 
$$
\E[R_T(\alpha^{\mathrm{sw}}_{1:T})]
\leq 
\E[R_T(\alpha^B_{1:T})] + L.
$$
The first part of the theorem now follows from Theorem \ref{thm:regret-of-a-Bayesian}.\\ 

\end{proof}
\vspace{5pt}
\section{Arrow Securities}
\label{sec:appendix-Arrow-securities}

Consider an economy with $S$ states and $K$ assets with state-contingent payoffs described by a $K \times S$ matrix of asset payoffs $A$.  A set of assets for this economy is {\bf complete} if for any wealth profile $w=(w_1, \dots, w_S)\in \mathbb{R}^S$, describing wealth in each state, there exists a portfolio of assets, $q=(q^1, \dots, q^K)$, that  achieves $w$. This is clearly equivalent to the requirement that $A$ has rank $S$. So there must be at least $S$ assets, and if there are more than $S$ assets, then $K-S$ of them are redundant. Thus, we restrict attention to the case $K=S$.  

The asset payoff matrix for an economy with a full set of $S$ Arrow securities is the $S \times S$ identity matrix $I$. As the complete market asset payoff matrix $A$ has full rank, its inverse $A^{-1}$ exists and $A^{-1}  A=I$. Note that $A^{-1}$ describes $K$ portfolios of securities; the rows of $A^{-1}$ 
describe
portfolios of the original assets whose payoffs are identical to 
the corresponding Arrow securities. 
Conversely, any asset is payoff-equivalent to the portfolio of Arrow securities described by the payoffs across states of that original asset.  

The prices of Arrow securities are often called ``state prices'' as they are each the price now of one unit of wealth in the state indexing the Arrow security. Given equilibrium prices for Arrow securities the price of any asset is by no-arbitrage the price of the portfolio of Arrow securities that replicate the asset. Similarly, given equilibrium prices of securities, the price of an Arrow security is the price of the portfolio described by $A^{-1}$ that replicates the Arrow security. 

Given this equivalence between security structures we see that for an economy with a complete set of assets, equilibrium can be described equivalently by an equilibrium in the economy with the original assets or equilibrium in an economy with a full set of Arrow securities. A simple example might be useful for seeing what this equivalence requires and what it implies. Suppose, for example, that there are two states and two securities with payoff matrix $A$ given by 
$$
\begin{bmatrix}
1 & 1\\
0 & 1 
\end{bmatrix}
$$

So asset one is a risk free asset and asset two is a risky asset that pays off only in state $2$. In this case $A^{-1}$ is 
$$
\begin{bmatrix}
1 & -1\\
0 & 1 
\end{bmatrix}
$$

The portfolio of original assets that replicates Arrow security one is $(1, -1)$ and the portfolio that replicates Arrow security two is $(0, 1)$. Let $(p_1, p_2)$ be Arrow security prices and $(q_1, q_2)$ be prices of the original securities (with payoff matrix $A$). In an equilibrium these prices must satisfy $p_1 = q_1 - q_2$ and $p_2 = q_2$ as otherwise there is an arbitrage opportunity. Note that this equivalence requires both that the rank of $A$ is $S$ and that agents can hold both long positions and short positions in the original securities; they would never choose to hold short (or even zero) positions in Arrow securities.

\bibliographystyle{abbrvnat} 
\bibliography{learning-in-markets} 

\end{document}